\tikzset{
        state/.style={
          circle,
          draw,
          minimum size=6mm,
    },
}
\newtheorem{defi}{Definition}
\newtheorem{thm}{Theorem}
\newtheorem{exa}{Example}
\newcommand{\Procids}[1][\mathcal{P}]{\ensuremath{#1}}
\newcommand{\pref}[2][a]{\ensuremath{{#1}.{#2}}}
\renewcommand{\merge}{\mathrel{\parallel}}
\newcommand{\PA}{\ensuremath{\mathrm{PA}\:}}
\newcommand{\BPP}{\ensuremath{\mathrm{BPP}\:}}
\newcommand{\BCP}{\ensuremath{\mathrm{BCP}\:}}
\newcommand{\nstep}[1]{\ensuremath{\overset{#1}{\nrightarrow}}}
\newcommand{\nbisim}[1][]{%
    \setbox0=\hbox{\kern-.1ex{$\leftrightarrow$}\kern-.1ex}
    \setbox1=\vbox{\hbox{\raise .1ex \box0}\hrule}%
    \ensuremath{\not\mathrel{\hbox{\kern.1ex\box1\kern.1ex}_{#1}}}
  }
\newcommand{\defeqn}{\ensuremath{\mathrel{\stackrel{\textrm{def}}{=}}}}
\title{Parallel Pushdown Automata and Commutative Context-Free
  Grammars in Bisimulation Semantics (Extended Abstract)}
\author{Jos C. M. Baeten
\institute{CWI\\ Amsterdam, The Netherlands}
\email{Jos.Baeten@cwi.nl} 
\and
  Bas Luttik
\institute{Eindhoven University of Technology\\ Eindhoven, The
  Netherlands}
\email{s.p.luttik@tue.nl}
}
\begin{document}
\maketitle

%TODO mandatory: add short abstract of the document
\begin{abstract}
A classical theorem states that the set of languages given by a
pushdown automaton coincides with the set of languages given by a
context-free grammar. In previous work, we proved the pendant of this theorem in a setting with interaction: the set of processes given by a pushdown automaton coincides with the set of processes given by a finite guarded recursive specification over a process algebra with actions, choice, sequencing and guarded recursion, if and only if we add sequential value passing. In this paper, we look what happens if we consider parallel pushdown automata instead of pushdown automata, and a process algebra with parallelism instead of sequencing.
\end{abstract}

\section{Introduction}

This paper contributes to our ongoing project to integrate the theory
of automata and formal languages on the one hand and concurrency
theory on the other hand. The integration requires a more refined view
on the semantics of automata, grammars and expressions. Instead 
of treating automata as language acceptors, and grammars and
expressions as syntactic means to specify languages, we propose to view
them both as defining process graphs. The great benefit of this
approach is that process graphs can be considered modulo a plethora of
behavioural equivalences \cite{Gla93}. One can still consider language
equivalence and recover the classical theory of automata and formal
languages. But one can also consider finer notions such as
bisimilarity, which is better suited for interacting processes.

The project started with a structural characterisation of the class of
finite automata of which the processes are denoted by regular
expressions up to bisimilarity \cite{BCG07}. The investigation of the
expressiveness of regular expressions in bisimulation semantics was
continued in \cite{BLMvT16}.
In \cite{BLT13}, we replaced the Turing machine as an abstract model
of a computer by the Reactive Turing Machine, which has interaction as
an essential ingredient. Transitions have labels to give a notion of
interactivity, and we consider the resulting process graphs modulo bisimilarity
rather than language equivalence. Thus a Reactive Turing Machine defines an
\emph{executable} interactive process, refining the notion of computable
function.

In the same way as classical automata theory defines a hierarchy of
formal languages, we obtain a hierarchy of processes. In \cite{BCL22},
we proved that the set of processes given by a pushdown automaton
coincides with the set of processes given by a finite guarded
recursive specification over a process algebra with actions, choice,
sequencing and guarded recursion, if and only if we add sequential
value passing. Pushdown automata provide an abstract model of a
computer with a memory in the form of a stack. In this paper, we
consider the abstract model of a computer with a memory in the form of
a bag.  We consider the correspondence between parallel pushdown
automata and commutative context-free grammars. In the process
setting, a commutative context-free grammar is a process algebra
comprising actions, choice, parallelism and recursion. We start out
from the process algebra \BPP, extended with constants for acceptance
and non-acceptance (deadlock).

Then we find that in one direction, every process of a finite guarded recursive specification over this process algebra is the process of a parallel pushdown automaton, but not the other way around: there are parallel pushdown automata with a process that is not the process of any finite guarded recursive specification. This is even the case for the one-state parallel pushdown automaton of the bag itself, there is no finite guarded \BPP-specification for it. If we do want to get a recursive specification for the bag, we need to give some actions priority over others, and can find a satisfactory specification over \BPP$_{\theta}$, \BPP extended with the priority operator. Indeed, we can obtain a finite guarded specification over \BPP$_{\theta}$ for every one-state parallel pushdown automaton. On the other hand, there is a parallel pushdown automaton with two states that does not have a finite guarded specification over \BPP$_{\theta}$.

 If we add communication with value passing to this algebra, resulting in \BCP$_{\theta}$, we do get a complete correspondence: a process is the process of a parallel pushdown automaton if and only if it is the process of a finite guarded recursive specification. We can also get this result in a setting without the priority operator, so over \BCP, but then we need
  that the set of values can be countable, and we have also countable summation.

To conclude, we provide a characterisation of parallel pushdown processes as a regular process communicating with a bag. In the case without priority operator, we need a form of asymmetric communication.

\section{Preliminaries}

As a common semantic framework we use the notion of a
\emph{labelled transition system}.

\begin{defi} \label{def:tsspace}
A \emph{labelled transition system} is a quadruple
$(\mathcal{S},\mathcal{A},{\xrightarrow{}},{\downarrow})$, where
\begin{enumerate}
\item $\mathcal{S}$ is a set of \emph{states};
\item $\mathcal{A}$ is a set of \emph{actions}, $\tau\not\in\mathcal{A}$ is the \emph{unobservable} or \emph{silent} action;
\item
${\xrightarrow{}}\subseteq{\mathcal{S}\times\mathcal{A}\cup\{\tau\}\times\mathcal{S}}$ is
an $\mathcal{A}\cup\{\tau\}$-labelled \emph{transition relation}; and
     \item ${\downarrow}\subseteq\mathcal{S}$ is the set of \emph{final} or
       \emph{accepting} states.
 \end{enumerate}
A \emph{process graph} is a
labelled transition system with a special
designated \emph{root state} ${\uparrow}$, i.e., it is a quintuple
$(\mathcal{S},\mathcal{A},{\rightarrow},{\uparrow},{\downarrow})$ such that
$(\mathcal{S},\mathcal{A},{\rightarrow},{\downarrow})$ is a labelled transition system, and ${\uparrow}\in\mathcal{S}$.
We write $s\xrightarrow{a}s'$ for $(s,a,s')\in{\rightarrow}$
and $\term{s}$ for $s\in\mathalpha{\downarrow}$.
\end{defi}

For $w\in\Act^{*}$ we define
$s\steps{w}t$ inductively, for all states $s,t,u$: first, $s \steps{\epsilon} s$, and then, for $a \in \Act$, if $s \step{a} t$ and $t \steps{w} u$, then $s \steps{aw} u$, and
if $s \step{\tau} t$ and $t \steps{w} u$, then $s \steps{w} u$.

We see that $\tau$-steps do not contribute to the string $w$.
We write
$s\step{}t$ for there exists $a\in\Act \cup \{\tau\}$ such that
$s\step{a}t$. Similarly, we write $s\steps{}t$ for ``there exists
$w\in\Act^{*}$ such that $s\steps{w}t$'' and say that $t$ is
\emph{reachable} from $s$. If $s\steps{w}t$ takes at least one step, we write $s\steps{w}^{+}t$.
We write $s \not\step{a}$ if there is no $t \in \mathcal{S}$ with $s \step{a} t$.
Finally, we write $s \step{(a)} t$ for ``$s \step{a} t$ or $a = \tau$ and $s = t$''.

By considering language equivalence classes of process graphs, we
recover language equivalence as a semantics, but we can also consider other
equivalence relations. Notable among these is \emph{bisimilarity}.

\begin{defi}
  Let $(\mathcal{S},\mathcal{A},\rightarrow{},{\downarrow})$ be a
  labelled transition system. A symmetric binary relation $R$ on
  $\mathcal{S}$ is a \emph{strong bisimulation} if it satisfies the following
  conditions for every $s,t\in\mathcal{S}$ such that $s\mathrel{R} t$ and for all $a\in\mathcal{A}\cup\{\tau\}$:
  \begin{enumerate}
    \item if $s\xrightarrow{a}s'$ for some $s'\in\mathcal{S}$,
      then there is a $t'\in\mathcal{S}$ such that
      $t\xrightarrow{a}t'$ and $s'\mathrel{R}t'$; and
    \item if $s{\downarrow}$, then $t{\downarrow}$.
    \end{enumerate}
    If there is a strong bisimulation relating $s$ and $t$ we write $s \bisim t$.
\end{defi}

Sometimes we can use the \emph{strong} version of bisimilarity
defined above, which does not give special treatment to
$\tau$-labelled transitions. In general, when we do give special treatment to $\tau$-labeled transitions, we use some form of \emph{branching bisimulation} \cite{GW96}.

\begin{defi}
Let $(\mathcal{S},\mathcal{A},\rightarrow{},{\downarrow})$ be a
  labelled transition system. A symmetric binary relation $R$ on
  $\mathcal{S}$ is a \emph{branching bisimulation} if it satisfies the following
  conditions for every $s,t\in\mathcal{S}$ such that $s\mathrel{R} t$ and for all $a\in\mathcal{A}\cup\{\tau\}$:
    \begin{enumerate}
    \item if $s\xrightarrow{a}s'$ for some $s'\in\mathcal{S}$,
      then there are states $t',t'' \in\mathcal{S}$ such that
      $t \steps{\epsilon} t'' \step{(a)} t'$, $s \mathrel{R} t''$ and $s'\mathrel{R}t'$; and
    \item if $s{\downarrow}$, then there is a state $t' \in \mathcal{S}$ such that $t \steps{\epsilon} t'$ and  $t'{\downarrow}$.
    \end{enumerate}
    If there is a branching bisimulation relating $s$ and $t$, we write $s \bbisim t$.
\end{defi}

In this article, we use the finest branching bisimilarity
called \emph{divergence-preserving branching
  bisimilarity}, which was introduced in \cite{GW96} (see also
\cite{GLT09} and \cite{Lut20} for an overview of recent results).

\begin{defi}
A branching bisimulation $\mathrel{R}$ is \emph{divergence-preserving} if for all $s,t \in \mathcal{S}$, whenever there is a infinite sequence of states $s_0, s_1, \ldots$ such that $s = s_0$, $s_i \step{\tau} s_{i+1}$ and $s_i \mathrel{R} t$ for all $i \geq 0$, then there is a state $t'$ with $t \steps{\epsilon}^{+}t'$ and $s_i \mathrel{R} t'$ for some $i \geq 0$.
We write $s \bbisim^{\Delta} t$ if there is a divergence-preserving branching bisimulation relating $s$ and $t$.
\end{defi}

\begin{thm}
Strong bisimilarity, branching bisimilarity and divergence-preserving branching bisimilarity are equivalence relations on labeled transition systems.
\end{thm}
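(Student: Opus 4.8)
The plan is to prove the three equivalences separately, following the standard pattern for showing that a relation defined by "there exists a witnessing relation of type $X$" is an equivalence. For each of the three notions I must check reflexivity, symmetry, and transitivity. Reflexivity is immediate in each case: the identity relation on $\mathcal{S}$ is a symmetric relation that trivially satisfies the defining clauses (for divergence-preservation, a $\tau$-loop is matched by itself in one step). Symmetry is also immediate, since in all three definitions the witnessing relation $R$ is required to be symmetric, so if $R$ relates $s$ and $t$ then the very same $R$ witnesses the relation between $t$ and $s$. Hence the only real work is transitivity, and there the standard move is to show that the relational composition $R_1 \circ R_2$ of two witnessing relations is again a witnessing relation (after symmetrising, or noting it is already symmetric since $R_1, R_2$ are).

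For strong bisimilarity this is routine: given $s \mathrel{R_1} u \mathrel{R_2} t$ and $s \xrightarrow{a} s'$, first use clause~1 for $R_1$ to get $u \xrightarrow{a} u'$ with $s' \mathrel{R_1} u'$, then clause~1 for $R_2$ to get $t \xrightarrow{a} t'$ with $u' \mathrel{R_2} t'$, so $s' \mathrel{(R_1 \circ R_2)} t'$; the acceptance clause is handled the same way by chaining. For branching bisimilarity the argument is the well-known but more delicate one: when $s \xrightarrow{a} s'$ and $s \mathrel{R_1} u \mathrel{R_2} t$, clause~1 for $R_1$ yields $u \steps{\epsilon} u'' \step{(a)} u'$ with $s \mathrel{R_1} u''$ and $s' \mathrel{R_1} u'$; one then has to push the sequence $u \steps{\epsilon} u''$ and the step $u'' \step{(a)} u'$ through $R_2$, using a ``stuttering'' induction on the length of $u \steps{\epsilon} u''$ (each $\tau$-step either is mimicked by a genuine $\tau$-step of $t$, keeping the $R_1$-image in lockstep, or is absorbed because the $R_2$-images coincide). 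The case $a = \tau$ with $u'' = u'$ needs to be treated separately to stay within the $\step{(a)}$ shorthand. The acceptance clause is handled by a similar, simpler stuttering argument.

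The hard part, and the step I would budget the most care for, is transitivity for divergence-preserving branching bisimilarity: one must show that $R_1 \circ R_2$ inherits the divergence-preservation condition. The subtlety is that an infinite $\tau$-path from $s$ all of whose states are $(R_1 \circ R_2)$-related to a fixed $t$ need not project to an infinite $\tau$-path on the $R_1$-side — some of its steps may be ``absorbed'' on the $R_1$-side while being genuine progress on the $R_2$-side, and vice versa. The clean way around this is to invoke the known result (from \cite{GW96}, see also \cite{GLT09}) that the composition of divergence-preserving branching bisimulations is again one, i.e. that $\bbisim^{\Delta}$ is transitive; alternatively, one reproves it by the ``semi-branching'' technique of Glabbeek--Weijland, or by a König's-lemma argument showing that if no divergence on the $R_1$-side corresponds then infinitely many steps must be absorbed there, which forces an infinite $\tau$-path on the $R_2$-side, contradicting its divergence-preservation. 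In the extended-abstract style of this paper I would simply state that the three proofs are standard, give the identity relation and relational composition as the witnesses, and refer to \cite{GW96,GLT09} for the branching and divergence-preserving cases.
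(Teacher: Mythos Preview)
Your proposal is correct and, in fact, considerably more detailed than the paper's own proof, which consists solely of the sentence ``See \cite{Bas96} and \cite{GLT09}.'' Your final remark anticipates this exactly: in the extended-abstract style of the paper the authors simply defer to the literature, citing Basten's paper (which repairs the transitivity argument for branching bisimilarity via the semi-branching technique you mention) and \cite{GLT09} for the divergence-preserving case. One small quibble: the composition $R_1\circ R_2$ of two symmetric relations is \emph{not} in general symmetric, so your parenthetical ``or noting it is already symmetric since $R_1,R_2$ are'' should be dropped in favour of the symmetrisation you also mention.
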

\begin{proof}
See \cite{Bas96} and \cite{GLT09}.
\end{proof}

A \emph{process} is a divergence-preserving branching bisimilarity equivalence class of process graphs.

\section{Parallel Pushdown Automata}

We consider an abstract model of a computer with a memory in the form of a \emph{bag}: the bag is an unordered multiset, an element can be removed from the bag (\emph{get}), or an element can be added to it (\emph{put}).  Moreover, we can see when an element does not occur in the bag (a \emph{failed} get). This is somewhat different than the definition in \cite{Mol96}, who defined parallel pushdown automata by means of rewrite systems.

We claim our definition is a more natural one, when we compare with
the definition of a pushdown automaton. In a pushdown automaton, we
can pop the top element of the stack, or we can observe there is no
top element (i.e., the stack is empty). In  a bag, on the other hand,
all elements are directly accessible. We can pop (remove) any element,
or observe this element does not occur. Just observing that the bag is
empty, does not lead to a satisfactory theory (see \cite{vT11}).

We use notation $\mathcal{D}^{\lbag \rbag}$ for the set of bags with elements from $\mathcal{D}$. We use two disjoint copies of $\mathcal{D}$, $\mathcal{D}^{+} = \{(d,+) \mid d \in \mathcal{D}\}$ and $\mathcal{D}^{-} = \{(d,-) \mid d \in \mathcal{D}\}$. We write $+d$ instead of $(d,+)$ and $-d$ instead of $(d,-)$. We denote $\mathcal{D}^{\pm} = \mathcal{D}^{+} \cup \mathcal{D}^{-}$.

 \begin{defi}[parallel pushdown automaton]
 A \emph{parallel pushdown automaton} $M$ is a sextuple \\ $(\mathcal{S},\mathcal{A}, \mathcal{D},{\rightarrow},{\uparrow},{\downarrow})$ where:
 \begin{enumerate}
 \item $\mathcal{S}$ is a finite set of states,
 \item $\mathcal{A}$ is a finite input alphabet, $\tau \not\in \mathcal{A}$ is the unobservable step,
 \item $\mathcal{D}$ is a finite data alphabet, 
  \item $\mathalpha{\rightarrow} \subseteq \mathcal{S} \times  (\mathcal{A} \cup \{\tau\}) \times \mathcal{D}^{\pm} \times \mathcal{D}^{\lbag \rbag} \times \mathcal{S}$ is a finite set of \emph{transitions} or \emph{steps},
 \item $\mathalpha{\uparrow} \in \mathcal{S}$ is the initial state, in the initial state the bag is empty,
 \item $\mathalpha{\downarrow} \subseteq \mathcal{S}$ is the set of final or accepting states.
 \end{enumerate}
\end{defi}

 If $(s,a,+d,x,t) \in \mathalpha{\rightarrow}$ with $d \in \mathcal{D}$, we write $s \xrightarrow{a[+d/x]} t$, and this means that the machine, when it is in state $s$ and $d$ is an element of the bag, can consume input symbol $a$, replace $d$ by the bag $x$ and thereby move to state $t$. On the other hand, we write $s \xrightarrow{a[-d/x]} t$, and this means that the machine, when it is in state $s$ and the bag does not contain a $d$, can consume input symbol $a$, put $x$ in the bag and thereby move to state $t$. 
 In steps $s \xrightarrow{\tau[+d/x]} t$ and $s
 \xrightarrow{\tau[-d/x]} t$, no input symbol is consumed, only
 the bag is modified.
 
 Notice that we defined a parallel pushdown automaton in such a way that it can be detected whether or not an element occurs in the bag.

\begin{figure}[htb]
\begin{center}
\begin{tikzpicture}[->,>=stealth',node distance=2cm, node font=\footnotesize, state/.style={circle, draw, minimum size=.5cm,inner sep=0pt}]
  \node[state,initial,initial text={},initial where=left,accepting] (s0) {$\uparrow$};
  
  \path[->]
  (s0) edge[in=-45,out=45,loop]
         node[right] {$\begin{array}{c}a[-1/\lbag1\rbag]\\
                                      a[+1/\lbag1,1\rbag]\\
                                      b[+1/\emptyset]\end{array}$} (s0);
\end{tikzpicture}
\end{center}
\caption{Parallel pushdown automaton of a counter.}\label{fig:pda}
\end{figure}
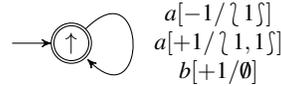

For example, consider the parallel pushdown automaton
depicted in Figure~\ref{fig:pda}. It represents the process that can
start to read an $a$, and after it has read at least one $a$, can read additional $a$'s but can also
read $b$'s. Upon acceptance, it will have read up to as many
$b$'s as it has read $a$'s.
Interpreting symbol $a$ as an increment and $b$ as a decrement, we can see this process as a \emph{counter}.
   
We do not consider the language of a parallel pushdown automaton, but rather
consider the process, i.e., the divergence-preserving branching bisimilarity equivalence class of the
process graph of a parallel pushdown automaton. A
state of this process graph is  a pair
$(s,x)$, where $s \in \mathcal{S}$ is the current state and $x \in \mathcal{D}^{\lbag \rbag}$ is the current contents of the bag. In the initial state, the bag is
empty. In a final state, acceptance can take place irrespective of the
contents of the bag. The transitions in the process graph are labeled by the inputs of the pushdown automaton or $\tau$.

\begin{defi}\label{def:pdalts}
  Let
    $M=(\mathcal{S},\mathcal{A}, \mathcal{D},{\rightarrow},{\uparrow},{\downarrow})$
  be a parallel pushdown automaton.
  The \emph{process graph}
  $\mathcal{P}(M)=(\mathcal{S}_{\mathcal{P}(M)},\mathcal{A},{\xrightarrow{}}_{\mathcal{P}(M)},{\uparrow}_{\mathcal{P}(M)},{\downarrow}_{\mathcal{P}(M)})$ associated with $M$ is
  defined as follows:
  \begin{enumerate}
  \item $\mathcal{S}_{\mathcal{P}(M)} = \{(s,x)\mid s\in\mathcal{S}\ \&\
    x\in\mathcal{D}^{\lbag \rbag}\}$;
  \item ${\xrightarrow{} _{\mathcal{P}(M)}}\subseteq
    {\mathcal{S}_{\mathcal{P}(M)}\times\mathcal{A}\cup\{\tau\}\times
      \mathcal{S}_{\mathcal{P}(M)}}$ is the least relation such that
    for all $s,s'\in\mathcal{S}$, $a\in\mathcal{A}\cup\{\tau\}$, $d\in\mathcal{D}$ and
    $x,x'\in\mathcal{D}^{\lbag \rbag}$ we have
    \begin{equation*}
      (s,{\lbag d\rbag} \cup x)\xrightarrow{a}_{\mathcal{P}(M)}(s',x' \cup x)\
        \text{if, and only if,}\ s\xrightarrow{a[+d/x']}s'\enskip;
      \end{equation*}
         \begin{equation*}
      (s,x)\xrightarrow{a}_{\mathcal{P}(M)}(s',x' \cup x)\
        \text{if, and only if, there exists $d\not\in x$ such that}\ s\xrightarrow{a[-d/x']}s'\enskip;
      \end{equation*}
  \item $\uparrow_{\mathcal{P}(M)}=(\uparrow,\emptyset)$;
 \item ${\downarrow}_{\mathcal{P}(M)}= \{(s,x)\mid s\in{\downarrow}\
    \&\ x\in\mathcal{D}^{\lbag \rbag}\}$.
   \end{enumerate}
\end{defi}

To distinguish, in the definition above, the set of states, the
transition relation, the initial state and the set of accepting states
of the parallel pushdown automaton from similar components of the associated
process graph, we have attached a subscript ${\mathcal{P}(M)}$ to
  the latter. In the remainder of this paper, we will suppress the
  subscript whenever it is already clear from the context whether a
  component of the parallel pushdown automaton or its associated process graph is meant.

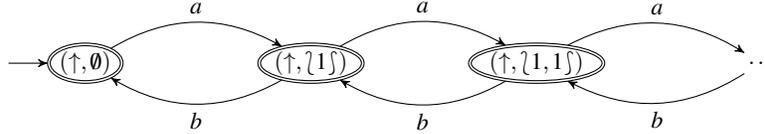
\begin{figure}[htb]
\begin{center}
\begin{tikzpicture}[->,>=stealth',node distance=3cm, node
  font=\footnotesize, state/.style={ellipse, draw, minimum size=.5cm,inner sep=0pt}]
  \node[state,accepting,initial,initial text={},initial where=left] (s0)
  {$(\uparrow,\emptyset)$};
  \node[state,accepting] [right of=s0] (s1) {$(\uparrow,\lbag1\rbag)$};
  \node[state,accepting] [right of=s1] (s2) {$(\uparrow,\lbag1,1\rbag)$};
  \node[state,accepting, draw=none] [right of=s2] (sdots) {$\dots$};

  \path[->]
    (s0) edge[bend left] node[above] {$a$} (s1)
    (s1) edge[bend left] node[above] {$a$} (s2)
    (s2) edge[bend left] node[above] {$a$} (sdots);
  \path[->]
    (sdots) edge[bend left] node[below] {$b$} (s2)
    (s2) edge[bend left] node[below] {$b$} (s1)
    (s1) edge[bend left] node[below] {$b$} (s0);
\end{tikzpicture}
\end{center}
\caption{The process graph of the counter.}\label{fig:pdatrans}
\end{figure}

Figure~\ref{fig:pdatrans} depicts the process graph associated
with the pushdown automaton depicted in Figure~\ref{fig:pda}.

In language equivalence, the definition of acceptance in parallel
pushdown automata leads to the same set of languages when we define
acceptance by final state (as we do here) and when we define
acceptance by empty bag (not considering final states). In
bisimilarity, these notions are different: acceptance by empty bag
yields a smaller set of processes than acceptance by final state. Note
that the process graph in Figure~\ref{fig:pdatrans} has infinitely
many non-bisimilar final states. It is, therefore, not bisimilar to
the process graph of a parallel pushdown automaton that accepts by
empty bag. For details, see \cite{BCLT09, vT11}.

In order to illustrate that we can realise acceptance by empty bag
also if we define acceptance by final state, consider the parallel
pushdown automaton of the counter that only accepts when empty in
Figure~\ref{counterempty}. We need three states to realise a process graph
that is divergence-preserving branching bisimilar to the process graph
in Figure~\ref{fig:pdatrans}, but with only the initial state accepting.

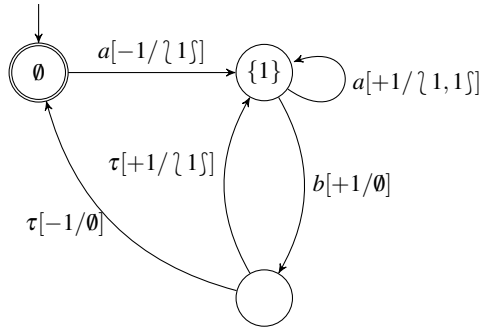
\begin{figure}[htb]
\begin{center}
\begin{tikzpicture}[->,>=stealth',node distance=3cm, node
  font=\footnotesize, state/.style={circle, draw, minimum
    size=.75cm,inner sep=0pt}]
  
  \node[state,initial,initial text={},initial where=above,accepting]
  (s0) {$\emptyset$};
  \node[state, right of=s0] (s1) {$\{1\}$};
   \node[state, below of=s1] (s4) {};
  
  \path[->]
  (s0) edge node[above] {$a[-1/\lbag 1\rbag]$} (s1)
                                   (s1) edge[bend left] node[right]
                                   {$b[+1/\emptyset]$} (s4);

   \path[->]
   (s1) edge[in=380,out=325,loop]
   node[right] {$a[+1/\lbag 1,1\rbag]$} (s1);

   \path[->]
  (s4) edge[bend left] node[left] {$\tau[-1/\emptyset]$} (s0);
 \path[->]
  (s4) edge[bend left] node[left,yshift=2ex]{$\tau[+1/\lbag 1\rbag]$} (s1);
\end{tikzpicture}
\end{center}
\caption{Counter only accepting when empty.}\label{counterempty}
\end{figure}

An important example of a parallel pushdown automaton is the bag process itself. We consider the bag that is always accepting in Figure~\ref{fig:bagalways}.
 For a given data set $\mathcal{D}$, it has actions $\mathit{ins}(d)$ (insert), $\mathit{rem}(d)$ (remove) and $\mathit{show}(^{-}d)$ (show there is no $d$). For each $d \in \mathcal{D}$, there are the transitions shown. We need the $\mathit{show}(^{-}d)$ transitions later, to indicate that no (further) remove transitions are possible. We use this in Section \ref{characterisation}.

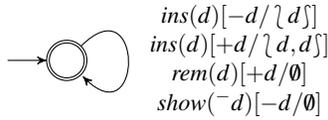
\begin{figure}[htb]
\begin{center}
\begin{tikzpicture}[->,>=stealth',node distance=2cm, node font=\footnotesize, state/.style={circle, draw, minimum size=.5cm,inner sep=0pt}]
  \node[state,initial,initial text={},initial where=left,accepting] (s0) {};
  
  \path[->]
  (s0) edge[in=-45,out=45,loop]
         node[right] {$\begin{array}{c}\mathit{ins}(d)[-d/\lbag d\rbag]\\
                                      \mathit{ins}(d)[+d/\lbag d,d\rbag]\\
                                      \mathit{rem}(d)[+d/\emptyset]\\
                                      \mathit{show}(^{-}d)[-d/\emptyset]\end{array}$} (s0);
\end{tikzpicture}
\end{center}
\caption{Parallel pushdown automaton of an always accepting bag.}\label{fig:bagalways}
\end{figure}

A parallel pushdown automaton has only finitely many transitions, so there is a maximum number of transitions from a given state, called its \emph{branching degree}. Then, also the associated process graph has a branching degree, that cannot be larger than the branching degree of the underlying parallel pushdown automaton. Thus, in a process graph associated with a parallel pushdown automaton, the branching is always \emph{bounded}. However, it is possible that its divergence-preserving branching bisimilarity equivalence class contains a process graph that is infinitely branching. Consider the following example.

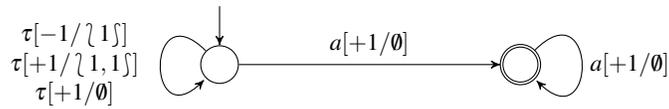
\begin{figure}[htb]
\begin{center}
\begin{tikzpicture}[->,>=stealth',node distance=4cm, node font=\footnotesize, state/.style={circle, draw, minimum size=.5cm,inner sep=0pt}]
  \node[state,initial,initial text={},initial where=above] (s0) {};
  \node[state,accepting] [right of=s0] (s1) {};
  
  \path[->]
  (s0) edge[in=225,out=135,loop]
         node[left] {$\begin{array}{c}\tau[-1/\lbag1\rbag]\\
                                      \tau[+1/\lbag1,1\rbag]\\
                                      \tau[+1/\emptyset] \end{array}$} (s0)
  (s0) edge node[above] {$a[+1/\emptyset]$} (s1)
  (s1) edge[in=-45,out=45,loop] node[right] {$a[+1/\emptyset]$} (s1);
\end{tikzpicture}
\end{center}
\caption{Parallel pushdown automaton with a divergence.}\label{fig:pdadiv}
\end{figure}

\begin{exa}
Consider the parallel pushdown automaton in Figure~\ref{fig:pdadiv}. It has a process graph consisting of two infinite rows of nodes. The nodes in the top row all have a divergence, and modulo a divergence-preserving branching bisimilarity can collaps into one node, as shown in the process graph in Figure~\ref{fig:infbranch}. This top node still needs a divergent $\tau$ loop.
\end{exa}

\begin{figure}[htb]
\begin{center}
\begin{tikzpicture}[->,>=stealth',node distance=2cm, node font=\footnotesize, state/.style={ellipse, draw, minimum size=0.5cm,inner sep=0pt}]
% nodes
  \node[initial, initial text=,state, initial where=above] (a1) at (1,2) {};
  \node (a5) at (7,1) {};
  \node[state, double, double distance=1pt] (b1) at (1,0) {};
  \node[state, double, double distance=1pt] (b2) at (3,0) {};
  \node[state, double, double distance=1pt] (b3) at (5,0) {};
  \node[state, double, double distance=1pt] (b4) at (7,0) {};
  \node (b5) at (9,0) {};
% edges
  \path[->]
  (a1) edge[in=225,out=135,loop]
         node[left] {$\tau$} (a1);
  \path[->] (a1) edge [right] node {$a$} (b1);
  \path[->] (a1) edge [right]  node[xshift=0.1cm] {$a$} (b2);
  \path[->] (a1) edge [right]  node[xshift=0.1cm] {$a$} (b3);
  \path[->] (a1) edge [right]  node[xshift=0.2cm] {$a$} (b4);
  \path[->] (b4) edge [below] node {$a$} (b3);
  \path[->] (b3) edge [below] node {$a$} (b2);
  \path[->] (b2) edge [below] node {$a$} (b1);
  \path[->, dashed] (a1) edge [right] node[xshift=0.2cm] {$a$} (a5);
  \path[->, dashed] (b5) edge [below] node {$a$} (b4);
\end{tikzpicture}
\end{center}
\caption{Process graph divergence-preserving branching bisimilar to the parallel pushdown automaton with divergence.}\label{fig:infbranch}
\end{figure}
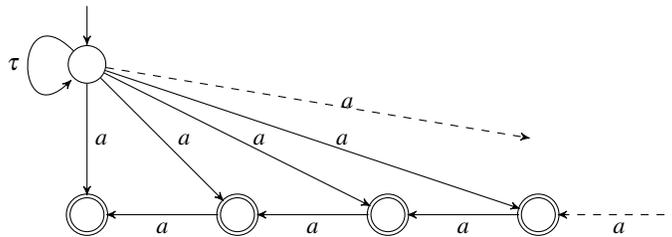

 \section{Parallel Processes}

 In the process setting, a commutative context-free grammar is a
 process algebra comprising actions, choice, parallelism and
 recursion. We start out from the process algebra \PA of \cite{BK84},
 but with sequential composition restricted to action prefixing, and
 then extended with constants $\dl$ and $\emp$ to denote deadlock and
 acceptance. We call this process algebra \BPP$^{\dl\emp}$, for Basic Parallel
 Processes with $\dl$ and $\emp$.

Let $\Act$ be a set of \emph{actions} and $\tau\not\in\Act$ \emph{the silent action}, symbols denoting atomic
events, and let $\Procids$ be a finite set of \emph{process identifiers}. The
sets $\Act$ and $\Procids$ serve as parameters of the process theory
that we shall introduce below. We use symbols $a,b,\ldots$, possibly indexed, to range over $\Act \cup \{\tau\}$, symbols $X,Y,\dots$, possibly indexed, to range over $\Procids$.
The set of \emph{parallel process expressions} is generated by the following grammar ($a\in\Act \cup \{\tau\}$, $X\in\Procids$):
\begin{equation*}
p ::= \dl \mid \emp \mid \pref[a]{p} \mid p + p \mid p \merge  p  \mid X\enskip.
\end{equation*}
The constants $\dl$ and $\emp$ respectively denote the
\emph{deadlocked} (i.e., inactive but not accepting)
process and the \emph{accepting} process. For each
$a\in\Act \cup \{\tau\}$ there is a unary action prefix operator
$\pref[a]{\_}$. We fix a finite data set $\mathcal{D}$, and actions can be parametrised with a data element.
The binary operators $+$ and $\merge$ denote
alternative composition and parallel composition,
respectively. We adopt the convention that $\pref[a]{\_}$ binds
strongest and $+$ binds weakest. 

For a (possibly empty) sequence
$p_1,\dots,p_n$ we inductively define $\sum_{i=1}^np_i=\dl$ if $n=0$
and $\sum_{i=1}^{n}p_i=(\sum_{i=1}^{n-1}p_i)+p_n$ if $n>0$.
Likewise, for a sequence
$p_1,\dots,p_n$ we inductively define $\merge_{i=1}^np_i=\emp$ if $n=0$
and $\merge_{i=1}^{n}p_i=(\merge_{i=1}^{n-1}p_i) \merge p_n$ if $n>0$.

A recursive specification over parallel process expressions is a mapping
$\Gamma$ from $\Procids$ to the set of parallel process expressions. The idea is that the process expression
$p$ associated with a process identifier $X\in\Procids$ by $\Gamma$
\emph{defines} the behaviour of $X$. We prefer to think of $\Gamma$ as a
collection of \emph{defining equations}
  $X\defeqn p$,
  exactly one for every $X\in\Procids$.
We shall, throughout the paper, presuppose a recursive specification
$\Gamma$ defining the process identifiers in $\Procids$, and we shall
usually simply write $X\defeqn p$ for $\Gamma(X)=p$. Note that, by our
assumption that $\Procids$ is finite, $\Gamma$ is finite too.

\begin{figure}[htb]
  \centering
  \begin{osrules}
        \osrule*{}{\emp \downarrow}
        \qquad \qquad
        \osrule*{}{\pref[a]p\step{a}p}
    \\
\osrule*{p \downarrow}{(p+q) \downarrow}
\quad
\osrule*{q \downarrow}{(p+q) \downarrow}
\quad
    \osrule*{p \step{a} p'}{p + q \step{a} p'}
    \quad
    \osrule*{q \step{a} q'}{p + q \step{a} q'}
  \\
 \osrule*{p \downarrow & q \downarrow}{p \merge  q \downarrow}
  \qquad
    \osrule*{p\step{a} p'}{p \merge  q \step{a} p' \merge  q}
  \qquad
    \osrule*{q \step{a} q' }{p \merge  q
      \step{a} p \merge q'}
\\
    \osrule*{p\step{a}p' & X\defeqn p}{X\step{a}p'}
  \qquad
  \osrule*{\term{p} & X\defeqn p}{\term{X}}
  \end{osrules}
\caption{Operational semantics for parallel process expressions.}
\label{fig:semantics-tspseq}
\end{figure}

We associate behaviour with process expressions by defining, on the
set of process expressions, a unary acceptance predicate $\term{}$
(written postfix) and, for every $a\in\Act \cup \{\tau\}$, a binary transition
relation $\step{a}$ (written infix), by means of the transition system
specification presented in Figure~\ref{fig:semantics-tspseq}. 

By means of these rules, the set of parallel process expressions turns into a labelled transition system, so we have strong bisimilarity, branching bisimilarity and divergence-preserving branching bisimilarity on parallel process expressions.

The operational rules presented in Fig~\ref{fig:semantics-tspseq} are
in the so-called \emph{path format} from which it immediately follows
that strong bisimilarity is a congruence \cite{BV93}. (Divergence-preserving) branching bisimilarity, however, is not a congruence, but by adding a rootedness condition we get rooted (divergence-preserving) branching bisimilarity which is a congruence \cite{GW96}. As we will not use equational reasoning in this paper, we will not use the rootedness condition.

Some recursive specifications over \BPP$^{\dl  \emp}$ will give
processes that cannot be the process of a commutative pushdown automaton.

\begin{exa} \label{exa:unguarded}
Consider the recursive equation
\[ X \defeqn a.\emp + X \merge b.\emp \enskip. \]
We show the process graph generated by the operational rules in Figure~\ref{fig:infbranch2}. As $X \step{a} \emp$, we get $X \merge b.\emp \step{a} \emp \merge b.\emp = b.\emp$ and so $X \step{a} b.\emp$. 
Continuing like this we get $X \step{a} b^n \emp$ for each $n$. Note we also have $X \step{b} X$.
\end{exa}

\begin{figure}[htb]
\begin{center}
\begin{tikzpicture}[->,>=stealth',node distance=2cm, node font=\footnotesize, state/.style={ellipse, draw, minimum size=0.5cm,inner sep=0pt}]
% nodes
  \node[initial, initial text=,state, initial where=above] (a1) at (1,2) {$X$};
  \node (a5) at (7,1) {};
  \node[state, double, double distance=1pt] (b1) at (1,0) {$\emp$};
  \node[state] (b2) at (3,0) {$b.\emp$};
  \node[state] (b3) at (5,0) {$b.b.\emp$};
  \node[state] (b4) at (7,0) {$b.b.b.\emp$};
  \node (b5) at (9,0) {};
% edges
  \path[->]
  (a1) edge[in=225,out=135,loop]
         node[left] {$b$} (a1);
  \path[->] (a1) edge [right] node {$a$} (b1);
  \path[->] (a1) edge [right]  node[xshift=0.1cm] {$a$} (b2);
  \path[->] (a1) edge [right]  node[xshift=0.1cm] {$a$} (b3);
  \path[->] (a1) edge [right]  node[xshift=0.2cm] {$a$} (b4);
  \path[->] (b4) edge [below] node {$b$} (b3);
  \path[->] (b3) edge [below] node {$b$} (b2);
  \path[->] (b2) edge [below] node {$b$} (b1);
  \path[->, dashed] (a1) edge [right] node[xshift=0.2cm] {$a$} (a5);
  \path[->, dashed] (b5) edge [below] node {$b$} (b4);
\end{tikzpicture}
\end{center}
\caption{Process graph of the recursive specification of Example~\ref{exa:unguarded}.}\label{fig:infbranch2}
\end{figure}
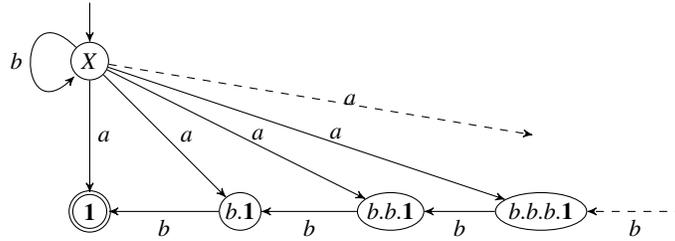

\begin{thm} \label{thm:infbranch}
The process graph of Figure~\ref{fig:infbranch2} is not divergence-preserving branching bisimilar to the process graph of any parallel pushdown automaton.
\end{thm}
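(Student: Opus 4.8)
The plan is to argue by contradiction. Suppose $M=(\mathcal{S},\mathcal{A},\mathcal{D},{\rightarrow},{\uparrow},{\downarrow})$ is a parallel pushdown automaton and $R$ is a divergence-preserving branching bisimulation witnessing $\mathcal{P}(M)\bbisim^{\Delta}G$, where $G$ is the process graph of Figure~\ref{fig:infbranch2} with root $X$; in particular $(\uparrow,\emptyset)\mathrel{R}X$. I will use only two features of $G$: that $X$ has no outgoing $\tau$-transition, and that $X$ has, for every $n\geq 0$, a transition $X\step{a}b^{n}.\emp$, where the targets $b^{n}.\emp$ are pairwise inequivalent modulo $\bbisim^{\Delta}$ --- each is a $\tau$-free chain of exactly $n$ consecutive $b$-steps ending in the accepting state $\emp$, and $\tau$-free chains of distinct lengths are not even strongly bisimilar.

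Next I would isolate the \emph{$X$-region} of $\mathcal{P}(M)$, namely the set of states $R$-related to $X$, and prove two facts about it. First, every $\tau$-step of $\mathcal{P}(M)$ that starts in an $X$-state ends in an $X$-state: if $(s,x)\mathrel{R}X$ and $(s,x)\step{\tau}(s',x')$, the transfer condition yields states $t'',t'$ with $X\steps{\epsilon}t''\step{(\tau)}t'$, $(s,x)\mathrel{R}t''$ and $(s',x')\mathrel{R}t'$; since $X$ has no $\tau$-step, necessarily $t''=t'=X$, so $(s',x')\mathrel{R}X$. Second, there is no infinite $\tau$-path through $X$-states, for by divergence-preservation such a path would produce a state $t'$ with $X\steps{\epsilon}^{+}t'$, which is impossible because $X$ has no $\tau$-step. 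Combined, these facts say that from the $X$-state $(\uparrow,\emptyset)$ one can follow $\tau$-steps --- staying inside the $X$-region at every step --- and must, after finitely many steps, arrive at an $X$-state $(s_{0},x_{0})$ that has no outgoing $\tau$-transition whatsoever.

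With such an $(s_{0},x_{0})$ in hand the contradiction is short. For each $n\geq 0$, transferring $X\step{a}b^{n}.\emp$ along $R$ and using that $(s_{0},x_{0})$ admits no $\tau$-step forces a \emph{direct} transition $(s_{0},x_{0})\step{a}(s_{n},x_{n})$ with $(s_{n},x_{n})\mathrel{R}b^{n}.\emp$. As $\bbisim^{\Delta}$ is an equivalence relation and the $b^{n}.\emp$ are pairwise inequivalent, the states $(s_{n},x_{n})$ are pairwise distinct, so $(s_{0},x_{0})$ has infinitely many distinct $a$-successors. But by Definition~\ref{def:pdalts} each transition $s_{0}\xrightarrow{a[+d/x']}s'$ of $M$ contributes an $a$-successor of $(s_{0},x_{0})$ only when $d$ occurs in $x_{0}$, and then only the single state $(s',x'\cup(x_{0}\setminus\lbag d\rbag))$, whereas each transition $s_{0}\xrightarrow{a[-d/x']}s'$ contributes only the single state $(s',x'\cup x_{0})$; since $M$ has finitely many transitions, $(s_{0},x_{0})$ has only finitely many $a$-successors --- contradiction.

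The genuinely routine part is the last paragraph, which merely re-proves the bounded-branching property of parallel pushdown process graphs already noted in the text. The step I expect to be the main obstacle is the second one: actually reaching a $\tau$-free state inside the $X$-region. This is exactly where divergence-preservation is indispensable --- under plain branching bisimilarity the infinite branching of $G$ could be matched by a finitely-branching parallel pushdown process graph in which a $\tau$-divergence hides it, precisely as the infinitely-branching graph of Figure~\ref{fig:infbranch} is (divergence-preserving) branching bisimilar to the process graph of the parallel pushdown automaton of Figure~\ref{fig:pdadiv}, whose infinitely-branching node carries a $\tau$-loop. What makes the argument go through here is that the root $X$ of $G$ carries no $\tau$-transition at all.
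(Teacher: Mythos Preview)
Your argument is correct. The paper, being an extended abstract, states the theorem without proof, so there is no ``paper's own proof'' to compare against. Your route---isolate the $X$-region, use the absence of $\tau$-steps at $X$ together with divergence-preservation to locate a $\tau$-free bisimulation partner $(s_0,x_0)$ of $X$, then derive a contradiction with the bounded branching of parallel pushdown process graphs---is exactly the natural one and matches the informal discussion preceding the theorem (the remark that process graphs of parallel pushdown automata have bounded branching, together with the contrast to Figure~\ref{fig:infbranch} where the infinitely-branching node \emph{does} carry a $\tau$-loop).
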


To exclude recursive specifications over \BPP$^{\dl\emp}$ that give
rise to process graphs with states that necessarily have infinitely many outgoing
transitions, it suffices to formulate a standard \emph{guardedness}
condition for recursive specifications.

\begin{defi}
We say a recursive specification is \emph{weakly guarded} if every occurrence of a
process identifier in the definition of some (possibly different)
process identifier occurs within the scope of an action prefix from $\mathcal{A} \cup \{\tau\}$, and \emph{strongly guarded} if every occurrence of a process identifier in the definition of some process identifier occurs within the scope of an action prefix from $\mathcal{A}$.
\end{defi}

We will show that every finite weakly guarded recursive specification over \BPP$^{\dl\emp}$ yields a parallel pushdown automaton. We first consider a couple of examples.

\begin{exa} \label{exa:difference}
  Consider the recursive specification
  \begin{equation*}
    AC \defeqn \emp + \pref[a]{(AC \merge (\emp + b.\emp))} \enskip.
  \end{equation*}
  By following the operational rules, we obtain a process graph that
  is bisimilar to the one shown in Figure~\ref{fig:pdatrans}, and thus we obtain the parallel pushdown automaton in Figure~\ref{fig:pda}. This is the always accepting counter.

If, instead, we use the equation 
 \begin{equation*}
    EC \defeqn \emp + \pref[a]{(EC \merge b.\emp)} \enskip.
  \end{equation*}
we get the counter that only accepts when it is empty, see the parallel pushdown automaton in Figure~\ref{counterempty}.
Now we can generalize the equation of $AC$ to the following
 \begin{equation*}
    \mathit{AB} \defeqn \emp + \sum_{d \in \mathcal{D}} \pref[ins(d)]{(\mathit{AB} \merge (\emp + rem(d).\emp))} \enskip.
  \end{equation*}
We see that this is a specification of the bag. However, this bag does not have the $\mathit{show}(^{-}d)$ actions to signal that a $d$ does not occur in the bag. In fact, we will show that there is no finite weakly guarded specification over \BPP$^{\dl\emp}$ that gives rise to the parallel pushdown automaton in Figure \ref{fig:bagalways}. For now, we first look at the other direction, to show that a finite weakly guarded specification over \BPP$^{\dl\emp}$ yields the process of a parallel pushdown automaton.
\end{exa}

Since we have weakly guarded recursion, we can bring every \BPP$^{\dl\emp}$-term into head normal form. The following result uses strong bisimulation, not branching bisimulation.

\begin{thm} \label{thm:commhnf}
Let $\Gamma$ be a weakly guarded \BPP$^{\dl\emp}$-specification. Every process expression $p$ can be brought into \emph{head normal form}, i.e. there are $a_i \in \mathcal{A} \cup \{\tau\}$ and process expressions $p_i$ such that
\begin{equation*}
p \bisim (\emp +) \sum_{i=1}^n a_i.p_i
\end{equation*}
where the $\emp$ summand may or may not occur.
\end{thm}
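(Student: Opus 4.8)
The plan is to prove this by induction on the structure of the process expression $p$, using the fact that the recursive specification $\Gamma$ is weakly guarded. First I would set up the statement as: for every parallel process expression $p$ there exist $n \geq 0$, actions $a_i \in \mathcal{A} \cup \{\tau\}$ and process expressions $p_i$ such that $p \bisim (\emp +) \sum_{i=1}^n a_i.p_i$, where the optional $\emp$ summand is present exactly when $\term{p}$. The base cases are immediate: $\dl$ is the empty sum ($n=0$, no $\emp$ summand); $\emp$ is the empty sum with the $\emp$ summand; and $\pref[a]{q}$ is already in head normal form with $n=1$, $a_1 = a$, $p_1 = q$, and no $\emp$ summand since $\nterm{(\pref[a]{q})}$.

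For the inductive step I would handle the three composition operators. For $p + q$: by induction hypothesis $p \bisim (\emp +)\sum_{i=1}^m a_i.p_i$ and $q \bisim (\emp +)\sum_{j=1}^k b_j.q_j$; inspecting the operational rules for $+$ in Figure~\ref{fig:semantics-tspseq}, the transitions of $p+q$ are exactly those of $p$ together with those of $q$, and $\term{(p+q)}$ iff $\term p$ or $\term q$, so the concatenated sum $\sum a_i.p_i + \sum b_j.q_j$ (with an $\emp$ summand iff at least one of the two has one) is a head normal form for $p+q$; congruence of $\bisim$ (the rules are in path format) lets us substitute the head normal forms of $p$ and $q$. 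For $p \merge q$: again take head normal forms of $p$ and $q$; by the rules for $\merge$, every transition of $p \merge q$ is either $p \merge q \step{a_i} p_i \merge q$ (from a summand $a_i.p_i$ of $p$) or $p \merge q \step{b_j} p \merge q_j$ (from a summand $b_j.q_j$ of $q$), and $\term{(p \merge q)}$ iff $\term p$ and $\term q$; hence $p \merge q \bisim (\emp +)\bigl(\sum_{i=1}^m a_i.(p_i \merge q) + \sum_{j=1}^k b_j.(p \merge q_j)\bigr)$, with the $\emp$ summand present iff both $p$ and $q$ carry one — this is the desired head normal form (note the arguments $p_i \merge q$ and $p \merge q_j$ need not themselves be in head normal form, which is fine, as head normal form only constrains the outermost layer).

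The one case requiring the guardedness hypothesis is $p = X$ for a process identifier $X$, with $X \defeqn q$ in $\Gamma$. Here I would observe that structural induction on the expression $X$ alone does not terminate, so instead I would argue as follows: because $\Gamma$ is weakly guarded, every occurrence of a process identifier in $q$ lies within the scope of some action prefix. Consequently, in the syntax tree of $q$, if we delete all subtrees rooted at an action-prefix node, no process identifier remains; I would make this precise by a secondary induction on the structure of $q$ restricted to the operators $\dl,\emp,+,\merge$ and action prefix, where for a subterm $\pref[a]{r}$ we stop and return the single summand $a.r$ (legitimate regardless of what $r$ contains), and for $+$ and $\merge$ we combine the already-obtained head normal forms of the (guarded, hence well-founded) subterms exactly as in the two paragraphs above. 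This yields $q \bisim (\emp +)\sum_{i=1}^n a_i.p_i$, and then by the operational rules $X \step{a} r$ iff $q \step{a} r$ and $\term X$ iff $\term q$, so $X$ has the same head normal form as $q$.

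The main obstacle is precisely this last step: making rigorous that weak guardedness rules out the non-terminating recursion $X \to q \to \cdots$ and justifies stopping the unfolding at action prefixes. Everything else is a routine verification that the operational rules respect the claimed head normal forms, combined with congruence of strong bisimilarity (guaranteed by the path format of the rules, as already noted in the text). It is worth emphasising that weak — not strong — guardedness suffices here, since $\tau$-prefixes are just as good as visible-action prefixes for bounding the head normal form; branching bisimulation plays no role, which is why the theorem is stated with $\bisim$.
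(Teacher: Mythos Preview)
Your proposal is correct and follows exactly the approach the paper intends: structural induction on $p$, with the $\merge$-case handled by Milner's Expansion Law, and the identifier case resolved by unfolding once and appealing to weak guardedness. The paper omits the proof for this particular theorem (this is an extended abstract), but for the analogous \BCP$^{\dl\emp}_{\theta}$ version it gives precisely the one-line justification ``By induction on the structure of $p$ (see \cite{BBR10}). We use Milner's Expansion Law''---you have simply spelled out the details, including a clean treatment of why the identifier case terminates.
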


As a result, we can bring every guarded recursive specification into Greibach Normal Form.

\begin{defi}
A recursive specification $\Gamma$ is in \emph{Greibach Normal Form} if every equation has the form $X \defeqn (\emp +) \sum_{i=1}^n a_i.\xi_i$ for actions $a_i \in \mathcal{A} \cup \{\tau\}$, where each $\xi_i$ is a parallel composition of identifiers of $\Gamma$, and $n \geq 0$.
\end{defi}

\begin{thm}
Let $\Gamma$ be a weakly guarded \BPP$^{\dl\emp}$-specification over identifiers $\mathcal{P}$. Then there is a finite set of identifiers $\mathcal{Q}$ with $\mathcal{P} \subseteq \mathcal{Q}$ and a recursive specification in Greibach Normal Form $\Delta$ over identifiers $\mathcal{Q}$ such that for all $X,Y \in \mathcal{P}$ we have $X \bisim Y$ with respect to $\Gamma$ if, and only if, $X \bisim Y$ with respect to $\Delta$. 
\end{thm}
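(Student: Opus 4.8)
The plan is to name every subexpression that can occur in a reachable state, turn each name's defining expression into a head normal form via Theorem~\ref{thm:commhnf}, and read $\Delta$ off from that; the new equations are then automatically in Greibach Normal Form, and the equivalence of the two specifications on $\mathcal{P}$ is witnessed by one explicit strong bisimulation. Let $B$ be the finite set consisting of all identifiers in $\mathcal{P}$ together with all subterms of the right-hand sides $\Gamma(X)$, $X\in\mathcal{P}$; note $B$ is closed under taking subterms. For each $b\in B$ that is not an identifier pick a fresh identifier $N_b$, write $\langle b\rangle=b$ when $b\in\mathcal{P}$ and $\langle b\rangle=N_b$ otherwise, and set $\mathcal{Q}=\mathcal{P}\cup\{N_b\mid b\in B\setminus\mathcal{P}\}$; a parallel composition of elements of $B$ written with $\merge$-irreducible factors (and $\emp$-factors discarded), $b_1\merge\cdots\merge b_m$, is translated to the $\mathcal{Q}$-expression $h(b_1\merge\cdots\merge b_m)=\langle b_1\rangle\merge\cdots\merge\langle b_m\rangle$. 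For $Z\in\mathcal{Q}$ let $d_Z$ be $\Gamma(Z)$ if $Z\in\mathcal{P}$ and $b$ if $Z=N_b$, so $d_Z$ is a right-hand side of $\Gamma$ or a subterm of one. \emph{The construction rests on the claim that every such $d_Z$ has only finitely many one-step transitions, and that, written out with $\merge$-irreducible factors, each of their targets is a parallel composition of elements of $B$.} Granting this, enumerate the transitions of $d_Z$ as $d_Z\step{a_i}q_i$ ($1\le i\le n$); by Theorem~\ref{thm:commhnf} $d_Z$ has a head normal form, and concretely the operational semantics gives $d_Z\bisim(\emp+)\sum_{i=1}^{n}a_i.q_i$, with the $\emp$-summand present precisely when $d_Z{\downarrow}$. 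We then define $\Delta(Z)=(\emp+)\sum_{i=1}^{n}a_i.h(q_i)$. Each $h(q_i)$ is a parallel composition of identifiers of $\mathcal{Q}$, so $\Delta$ is a finite recursive specification over $\mathcal{Q}\supseteq\mathcal{P}$ in Greibach Normal Form.

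Proving the claim is the main obstacle, and it is exactly here that weak guardedness of $\Gamma$ is used. One enumerates the one-step transitions of a process expression $p$ over $\mathcal{P}$ by descending through its $+/\merge$-skeleton: a prefix leaf $a.r$ contributes a single transition, whose target is $r$ put in parallel with the siblings of that leaf at every level; the leaves $\emp$ and $\dl$ contribute nothing; and an identifier leaf $Y$ is replaced by $\Gamma(Y)$, after which one descends into the skeleton of $\Gamma(Y)$. Since $\Gamma$ is weakly guarded, the skeleton of $\Gamma(Y)$ contains no identifier leaves, so this replacement is performed at most once along any path, and the enumeration terminates with finitely many transitions. Furthermore the body of each prefix leaf, and every parallel sibling involved, is a subterm of $d_Z$ or of some $\Gamma(X)$ — using that a subterm of a subterm is a subterm — so it lies in $B$; hence each target is a parallel composition of elements of $B$, as claimed. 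In particular, every state reachable from an identifier of $\mathcal{P}$ in the transition system of $\Gamma$ is a parallel composition of elements of the fixed finite set $B$.

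It remains to show that for $X,Y\in\mathcal{P}$ one has $X\bisim Y$ with respect to $\Gamma$ if and only if $X\bisim Y$ with respect to $\Delta$, and for this it suffices — by transitivity of $\bisim$ — to show that $X$, read over $\Gamma$, is strongly bisimilar to $X$, read over $\Delta$. On the disjoint union of the two transition systems consider $R=\{(s,h(s))\mid s\text{ a parallel composition of elements of }B\}$, closed under the commutativity, associativity and $\emp$-neutrality of $\merge$ (all sound for $\bisim$). That $R$ is a strong bisimulation is checked factorwise: parallel compositions of elements of $B$ are closed under $\Gamma$-transitions by the previous paragraph; a $\Gamma$-transition of $s$ comes from a single factor $b_i$, and by the construction of $\Delta$ the equation for $\langle b_i\rangle$ has exactly one summand for each transition of $b_i$, so the transitions of $s$ and of $h(s)$ correspond; and $s{\downarrow}$ in $\Gamma$ iff every factor of $s$ is accepting iff $h(s){\downarrow}$ in $\Delta$, again by the construction of $\Delta$. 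Since each $X\in\mathcal{P}$ is a single $\merge$-irreducible factor with $h(X)=X$, we get $(X,X)\in R$, hence $X$ read over $\Gamma$ is strongly bisimilar to $X$ read over $\Delta$, which finishes the proof.

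The substantive point is thus the finiteness claim of the second paragraph: using weak guardedness to keep the parallel factors of every reachable state inside the fixed finite pool $B$, so that finitely many auxiliary identifiers suffice and each equation of $\Delta$ is a finite sum. Everything else is routine, the one delicate aspect being that $\merge$ is associative, commutative and unital only up to $\bisim$ rather than syntactically, so the translation $h$ and the relation $R$ must be set up modulo these laws.
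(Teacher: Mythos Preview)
The paper is an extended abstract and states this theorem without proof, so there is no paper argument to compare against. Your proposal is correct and is the standard construction one would expect: introduce a fresh identifier for every non-identifier subterm of a right-hand side, define each new equation by reading off the one-step transitions (equivalently, the head normal form) of the corresponding subterm, and verify the correspondence via an explicit strong bisimulation that relates a parallel composition of subterms to the parallel composition of their names.

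Two minor remarks, neither of which is a gap. First, your appeal to Theorem~\ref{thm:commhnf} is somewhat redundant: you never actually use the abstract existence of a head normal form, since you directly enumerate the transitions of $d_Z$ and take $(\emp+)\sum a_i.q_i$ as the head normal form by fiat. That is fine, and indeed cleaner, because the finiteness of the enumeration is exactly your second paragraph. Second, the case analysis in that paragraph could be stated a touch more sharply: for $Z\in\mathcal{P}$ the skeleton of $d_Z=\Gamma(Z)$ already has no identifier leaves by weak guardedness, whereas for $Z=N_b$ the subterm $b$ may expose identifier leaves $Y$ (previously shielded by a prefix), and it is \emph{those} that get replaced once by $\Gamma(Y)$, which in turn has no identifier leaves. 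You say this, but the phrasing ``this replacement is performed at most once along any path'' slightly obscures that the two cases behave differently. Neither point affects correctness.
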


Now we are ready to prove the main result of this section.

\begin{thm} \label{bpptoppda}
Every weakly guarded recursive specification over \BPP$^{\dl\emp}$ has a process graph that is divergence-preserving branching bisimilar to the process graph of a parallel pushdown automaton.
\end{thm}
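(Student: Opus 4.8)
Here is the plan I would follow.

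By Theorem~\ref{thm:commhnf} and the subsequent normalisation into Greibach Normal Form, we may assume $\Gamma$ is in Greibach Normal Form over a finite set of identifiers $\mathcal{P}$, and we fix an identifier $X_0$ of $\Gamma$ whose process graph we wish to realise. Partition $\mathcal{P}$ into the set $T$ of \emph{terminating} identifiers (those whose defining equation contains the $\emp$ summand) and the set $N = \{Z_1,\dots,Z_\ell\}$ of the remaining, \emph{non-terminating} ones. Two observations drive the construction. First, every process expression reachable from $X_0$ is, up to the structural laws for $\merge$ and $\emp$, a parallel composition of identifiers, hence essentially a finite bag over $\mathcal{P}$: the bag of currently active identifiers. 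Second, such a bag is accepting if and only if it contains no element of $N$. The parallel pushdown automaton $M$ we build takes data alphabet $\mathcal{D} = \mathcal{P}$ and keeps in its bag exactly the bag of active identifiers; a single transition of a process expression, arising from one summand $a.\xi$ of one active identifier $Y$, will be simulated in a single step that removes the token $Y$ and inserts the bag of identifiers occurring in $\xi$.

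The states of $M$ are the initial state $\uparrow$ together with a gadget of ``run'' states $q_0, q_1, \dots, q_\ell$ (which collapse to a single state when $\ell = 0$); the accepting states are $q_\ell$ and, in addition, $\uparrow$ if and only if $X_0 \in T$. Writing $X_0 \defeqn (\emp +) \sum_{i=1}^n a_i.\xi_i$, we add for each $i$ a transition $\uparrow \xrightarrow{a_i[-d/x_i]} q_0$, where $x_i$ is the bag of identifiers occurring in $\xi_i$ and $d \in \mathcal{D}$ is fixed; this fires because the initial bag is empty, so $(\uparrow,\emptyset)$ faithfully mimics the first step of $X_0$. From each $q_i$ ($0 \le i \le \ell$) we include, for every identifier $Y$ and every summand $a.\xi$ of $Y$, the simulating transition $q_i \xrightarrow{a[+Y/x]} q_0$ with $x$ the bag of identifiers in $\xi$. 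The point that makes the statement non-trivial is that a parallel pushdown automaton accepts purely by final state, whereas here acceptance depends on the whole bag; to bridge this, for $0 \le i < \ell$ we add the silent transition $q_i \xrightarrow{\tau[-Z_{i+1}/\emptyset]} q_{i+1}$, which can fire precisely when $Z_{i+1}$ is currently absent from the bag. Hence a maximal burst of these silent steps carries $q_0$ to the accepting state $q_\ell$ exactly when all of $Z_1, \dots, Z_\ell$ are absent, i.e.\ exactly when the current bag is accepting; such a burst leaves the bag untouched and has length at most $\ell$, so it introduces no divergence, and the only other silent steps of $M$ are those simulating $\tau$-prefixed summands of $\Gamma$.

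Finally I would verify that the relation $R$ which links a reachable process expression $p$, with associated bag $M_p$, to the state $(\uparrow, \emptyset)$ when $p = X_0$, to $(q_0, M_p)$ in all cases, and to $(q_i, M_p)$ for $1 \le i \le \ell$ whenever $Z_1, \dots, Z_i$ are all absent from $M_p$ (so $(q_\ell, M_p)$ only when $M_p$ is accepting), is a divergence-preserving branching bisimulation. Matching ordinary transitions is immediate from the correspondence between process steps and simulating steps; a silent step $q_i \to q_{i+1}$ of $M$ is matched on the $\Gamma$-side by a stuttering step, since it does not change the bag and $p \mathrel{R} (q_{i+1}, M_p)$ still holds; if $\term{p}$ then $M_p$ is accepting, so $(q_i, M_p) \steps{\epsilon} (q_\ell, M_p)$, which is final; and every final state of $M$ is either $(\uparrow, \emptyset)$ with $X_0 \in T$, or some $(q_\ell, M_p)$ with $M_p$ accepting, matching $\term{p}$. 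For divergence-preservation one uses that the silent bursts are bounded: any infinite silent path of $M$ all of whose states are $R$-related to a single expression $p$ must contain infinitely many steps simulating a $\tau$-self-loop summand, and these witness the corresponding divergence of $p$; the converse is analogous.

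The heart of the matter --- and the only place where anything subtle happens --- is reconciling acceptance-by-final-state of parallel pushdown automata with the bag-dependent acceptance of $\BPP^{\dl\emp}$-expressions; everything else is bookkeeping on the bag of active identifiers. The $q_0,\dots,q_\ell$ gadget does this, and the essential discipline is that the acceptance check is carried out as a bounded, bag-preserving burst of $\tau$-steps, so that it neither violates the branching bisimulation clauses nor creates or removes divergences.
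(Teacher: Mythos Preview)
Your construction is correct, and it takes a genuinely different route from the paper. Both proofs first reduce to Greibach Normal Form, use $\mathcal{D}=\mathcal{P}$, and face the same real difficulty: a \BPP$^{\dl\emp}$ term is accepting precisely when its bag contains no non-accepting identifier, whereas a parallel pushdown automaton can only accept by final state. The paper solves this by tracking, in the finite control, the \emph{set} of non-accepting identifiers currently present (so the state set has size roughly $2^{|\mathbb{N}|}$, plus auxiliary states $N_X$); whenever a step removes a copy of some $X\in\mathbb{N}$ and adds only accepting identifiers, a single inert $\tau$-probe from $N_X$ checks whether another $X$ is still in the bag and updates the tracked set accordingly. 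Your gadget instead tracks nothing: after every real step you land in $q_0$ and may run a bounded $\tau$-burst $q_0\to q_1\to\cdots\to q_\ell$ that succeeds exactly when each $Z_j$ is absent, so $q_\ell$ is reachable iff the bag is accepting. This gives only $\ell+2$ control states at the price of up to $\ell$ inert $\tau$-steps per acceptance check, whereas the paper spends exponentially many states to keep the number of auxiliary $\tau$-steps per transition at one. Both choices yield a divergence-preserving branching bisimulation for the same reason: the extra $\tau$-steps are bag-preserving, bounded in length, and hence inert and divergence-free. Your argument that the added $\tau$-burst cannot create divergence is the crucial observation; the rest is, as you say, bookkeeping.
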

\begin{proof}
Without loss of generality, we can assume the specification is in Greibach Normal Form. Then, all states in the generated process graph are given by a parallel composition of identifiers of the specification.
Divide the identifiers of the specification into the accepting identifiers $\mathbb{A}$ (that have a $\emp$ summand) and the non-accepting identifiers $\mathbb{N}$ that do not have a $\emp$ summand. A state in the generated process graph is accepting iff all identifiers in the parallel composition are from $\mathbb{A}$. In the parallel pushdown automaton to be constructed, we need to keep track when the last element of $\mathbb{N}$ is removed, in order to switch to an accepting state.

We take the data set $\mathcal{D}$ to be the set of identifiers of the specification. $S$ is the initial identifier. In the states of the parallel pushdown automaton, we will encode whether or not there is an element of $\mathbb{N}$, so there is a state for each subset (not multiset) of $\mathbb{N}$.
As inspiration, we use the parallel pushdown automaton of the counter that only accepts when empty in Figure~\ref{counterempty}.

 %As long as one element of $\mathcal{O}$ occurs in the parallel composition, no acceptance can occur. We have to know when it terminates, in order to switch to acceptance.
%
The states of the parallel pushdown automaton are as follows:
\begin{itemize}
%\item $\emptyset$: the bisimulation will relate state $(\emptyset, x)$ for any multiset $x \in \mathbb{A}^{\lbag \rbag}$  to the parallel composition of the elements of $x$;
%\item $\underline{X}$, for $X \in \mathbb{N}$: the bisimulation will relate state $(\emptyset, x \cup \lbag X \rbag)$ for any multiset $x \in \mathbb{A}^{\lbag \rbag}$  to the parallel composition of the elements of $x \cup \lbag X \rbag$;
\item $N$, for $N \subseteq \mathbb{N}$ (a subset, not a submultiset). The bisimulation will relate state $(N,x \cup y)$ for any multiset $x \in \mathbb{A}^{\lbag \rbag}$ to the parallel composition of the elements of $x \cup y$, if $y$ contains all elements of $N$ and no other non-accepting identifiers.
\item There is an auxiliary state $N_X$, for each $N \subseteq \mathbb{N}$ and $X \in N$.
\end{itemize}
The initial state is $\emptyset$. $\emptyset$ is the only accepting state.

Now the steps:
\begin{enumerate}
\item $\emptyset \xrightarrow{a[-S/\xi]} \emptyset$, whenever $S$ has a summand $a.\xi$ and $\xi$ has only accepting identifiers.
\item $\emptyset \xrightarrow{a[-S/\xi]} N$ whenever $S$ has a summand $a.\xi$ and $\xi$ has at least one non-accepting identifier. $N$ collects the non-accepting identifiers of $\xi$.
%\item $\emptyset \xrightarrow{a[-S/\xi]} \underline{X}$, whenever $S$ has a summand $a.\xi$ and $\xi$ has exactly one non-accepting identifier $X$.
%\item $\underline{X} \xrightarrow{a[+X/\xi]} \emptyset$, whenever $X$ has a summand $a.\xi$ and all identifiers in $\xi$ are accepting.
%\item $\underline{X} \xrightarrow{a[+X/\xi]} \underline{Y}$, whenever $X$ has a summand $a.\xi$, and $\xi$ has exactly one non-accepting identifier $Y$.
%\item $\underline{X} \xrightarrow{a[+X/\xi]} N$, whenever $X$ has a summand $a.\xi$, $\xi$ has at least two non-accepting identifiers, and $N$ collects the non-accepting identifiers from $\xi$.
%\item $\underline{X} \xrightarrow{a[+Y/\xi]} N$, whenever $Y$ is accepting and has a summand $a.\xi$, $\xi$ has at least one non-accepting identifier and $N$ collects the non-accepting identifiers from $\xi$ and $X$.
%\item $\underline{X} \xrightarrow{a[+Y/\xi]} \underline{X}$, whenever $Y$ is accepting and has a summand $a.\xi$, $\xi$ has only accepting identifiers.
%\item $\emptyset \xrightarrow{a[+X/\xi]} \emptyset$, whenever $X$ is accepting, has a summand $a.\xi$ and all identifiers of $\xi$ are accepting.
%\item $\emptyset \xrightarrow{a[+X/\xi]} \underline{Y}$, whenever $X$ is accepting, has a summand $a.\xi$ and $Y$ is the unique non-accepting identifier of $\xi$.
%\item $\emptyset \xrightarrow{a[+X/\xi]} N$, whenever $X$ is accepting, has a summand $a.\xi$ with at least one non-accepting identifier, and $N$ collects the non-accepting identifiers of $\xi$.
\item $N \xrightarrow{a[+X/\xi]} N'$, whenever $X \not\in N$ is accepting, $X$ has a summand $a.\xi$ and $N'$ unites $N$ with the non-accepting identifiers of $\xi$.
\item $N \xrightarrow{a[+X/\xi]} N'$, whenever $X \in N$ has a summand $a.\xi$, $\xi$ has at least one non-accepting identifier and $N'$ unites $N$ with the non-accepting identifiers of $\xi$.
\item if $X \in N \subseteq \mathbb{N}$, $X$ has a summand $a.\xi$ with all identifiers in $\xi$ accepting, add three transitions
%and $N$ has at least three elements, add two states $N_X^{*}, N_X^{\dag}$ and transitions 
\[ N \xrightarrow{a[+X/\xi]} N_X \xrightarrow{\tau[+X/\lbag X \rbag]} N .\]
and
\[ N_X \xrightarrow{\tau[-X/\emptyset]} N-\{X\}. \]
%\item whenever $\{X,Y\} = N \subseteq \mathbb{N}$, $X$ has a summand $a.\xi$ with all identifiers in $\xi$ accepting, add in addition to the previous item two states $N_Y^{*}, N_Y^{\dag}$, replace transition $N_X^{\dag} \xrightarrow{\tau[-X/\lbag X\rbag]} N-\{X\}$ by $N_X^{\dag} \xrightarrow{\tau[-X/\lbag X\rbag]} N_Y^{*}$ and add transitions
%\[ N_Y^{*} \xrightarrow{\tau[+Y/\emptyset]} N_Y^{\dag} \xrightarrow{\tau[+Y/\lbag Y,Y\rbag]} \{Y\}, N_Y^{\dag} \xrightarrow{\tau[-Y/\lbag Y\rbag]} \underline{Y}. \]
%Notice state $N_Y^{\dag}$ also has two outgoing transitions.
%\item whenever $\{X\} = N \subseteq \mathbb{N}$, $X$ has a summand $a.\xi$ with all identifiers in $\xi$ accepting, add two states $N_X^{*}, N_X^{\dag}$ and add transitions
%\[ N \xrightarrow{a[+X/\xi]} N_X^{*} \xrightarrow{\tau[+X/\emptyset]} N_X^{\dag} \xrightarrow{\tau[+X/\lbag X,X\rbag]} N, N_X^{\dag} \xrightarrow{\tau[-X/\lbag X\rbag]} \underline{X}. \]
\end{enumerate}
Notice that all the added $\tau$-steps in the transition system are inert, as from the added $N_X$ states exactly one transition can be taken, depending on whether or not $X$ occurs in the parallel composition.
\end{proof}

\begin{figure}[htb]
\begin{center}
\begin{tikzpicture}[->,>=stealth',node distance=3cm, node
  font=\footnotesize, state/.style={circle, draw, minimum
    size=.75cm,inner sep=0pt}]
  
  \node[state,initial,initial text={},initial where=above,accepting]
  (s0) {$\emptyset$};
  \node[state, right of=s0] (s1) {$\{B\}$};
  \node[state, below of=s1] (s3) {$$};
  
  \path[->]
  (s0) edge[in=220,out=145,loop]
         node[left] {$\begin{array}{l}c[-S/\lbag S, D\rbag]\\
                                      c[+S/\lbag S, D\rbag]\\
                                      d[+D/\emptyset]\end{array}$}
                                  (s0)
  (s0) edge node[above] {$\begin{array}{c}
                                                            a[+S/\lbag
                                       S,B\rbag]\\
                                                           a[-S/\lbag
                                       S,B\rbag]\end{array}$} (s1);     

   \path[->]
   (s1) edge[in=380,out=325,loop]
   node[right] {$\begin{array}{l}a[+S/\lbag S,B\rbag]\\
                                      c[+S/\lbag S, D\rbag]\\
                                      d[+D/\emptyset]\end{array}$} (s1)
   (s1) edge[bend left] node[right] {$b[+B/\emptyset]$} (s3);

   \path[->]
   (s3) edge[bend left] node[left] {$\tau[-B/\emptyset]$} (s0);

   \path[->]
   (s3) edge[bend left] node[left,yshift=2ex]{$\tau[+B/\lbag B\rbag]$} (s1);
\end{tikzpicture}
\end{center}
\caption{Parallel pushdown automaton of Example \ref{exa:largeppda}.}\label{fig:largeppda}
\end{figure}
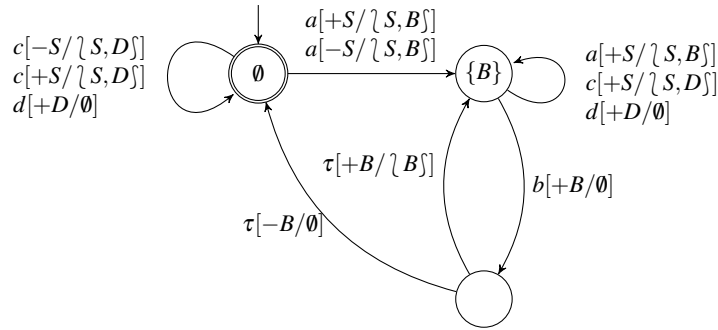

\begin{exa}\label{exa:largeppda}
Let the guarded recursive specification $\Gamma$ be given as follows. Notice it is in Greibach Normal Form, and $\mathbb{N} = \{B\}, \mathbb{A} = \{S,D\}$.
\[ S \defeqn \emp + a.(S \merge B) + c.(S \merge D)	\qquad B \defeqn b.\emp \qquad D \defeqn \emp + d.\emp \]
Following the proof of Theorem~\ref{bpptoppda} results in the parallel pushdown automaton shown in Figure \ref{fig:largeppda}.
Notice the similarity with the parallel pushdown automaton shown in Figure~\ref{counterempty}.
\end{exa}

As we stated, the other direction does not work: we cannot find a
finite weakly guarded \BPP$^{\dl \emp}$-specification for the
one-state parallel pushdown automaton of the always accepting bag in
Figure \ref{fig:bagalways}. It is technically somewhat simpler to
prove such a negative result for the
% We do not prove this, but give the proof for the simpler,
one-state parallel pushdown automaton shown in Figure~\ref{fig:ppda.ac}.

\begin{figure}[htb]
\begin{center}
\begin{tikzpicture}[->,>=stealth',node distance=2cm, node font=\footnotesize, state/.style={circle, draw, minimum size=.5cm,inner sep=0pt}]
  \node[state,initial,initial text={},initial where=left,accepting] (s0) {$\uparrow$};
  
  \path[->]
  (s0) edge[in=-45,out=45,loop]
         node[right] {$\begin{array}{c}c[-1/\lbag1\rbag]\\
                                      a[+1/\lbag1,1\rbag]\\
                                      b[+1/\emptyset]\end{array}$} (s0);
\end{tikzpicture}
\end{center}
\caption{Parallel pushdown automaton of a counter with a change.}\label{fig:ppda.ac}
\end{figure}
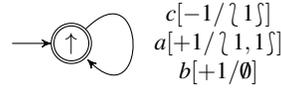

\begin{thm}\label{thm:one-statenospec}
For the one-state parallel pushdown automaton in Figure~\ref{fig:ppda.ac} there is no finite weakly guarded \BPP$^{\dl\emp}$ specification such that their process graphs are divergence-preserving branching bisimilar.
\end{thm}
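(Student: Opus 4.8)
Write $G_M$ for the process graph of the automaton in Figure~\ref{fig:ppda.ac}: its states are the pairs $(\uparrow,\lbag 1^{n}\rbag)$, which I abbreviate $n$ for $n\ge 0$, all accepting, with no $\tau$-transitions, and with transitions $0\xrightarrow{c}1$ and $n\xrightarrow{a}n{+}1$, $n\xrightarrow{b}n{-}1$ for $n\ge 1$. The plan is to isolate one numerical invariant that \BPP$^{\dl\emp}$ cannot respect --- the maximal number of consecutive $b$-actions a state admits --- and to play it against the freedom of parallel composition to append components. Suppose, towards a contradiction, that $\Gamma$ is a finite weakly guarded \BPP$^{\dl\emp}$-specification whose distinguished identifier has process graph $G$ with $G\bbisim^{\Delta}G_M$, witnessed by a relation $R$. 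By Theorem~\ref{thm:commhnf} and the subsequent Greibach Normal Form theorem I may assume $\Gamma$ is in Greibach Normal Form, which only changes the process graph of the distinguished identifier up to strong bisimilarity (hence preserves $\bbisim^{\Delta}$-equivalence with $G_M$) and guarantees that every reachable state of $G$ is a parallel composition of identifiers, i.e.\ a finite multiset over the finite identifier set. Since every state $n$ of $G_M$ is reachable, fix for each $n\ge 1$ a reachable multiset $\xi_{n}$ of $G$ with $\xi_{n}\mathrel{R}n$.

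Put $\beta(p)=\sup\{k\mid p\steps{b^{k}}p'\text{ for some }p'\}$. I would establish three facts. (i) $\beta$ is invariant under branching bisimilarity (``$p\steps{b^{k}}$'' is a weak-trace property); since from state $n$ of $G_M$ one can perform exactly $n$ consecutive $b$'s and no more, this gives $\beta(\xi_{n})=n$. (ii) $\beta$ is additive over $\merge$: $\beta(p\merge q)=\beta(p)+\beta(q)$ whenever both sides are finite; the inequality ``$\ge$'' concatenates maximal $b$-runs of $p$ and of $q$, while ``$\le$'' uses that every $b$-run of $p\merge q$ decomposes, via the operational rules for $\merge$, into interleaved $b$-runs of $p$ and of $q$. (iii) A reachable state $\psi$ of $G$ can weakly perform $c$ (i.e.\ $\psi\steps{\epsilon}\psi^{*}\xrightarrow{c}\psi'$) if and only if $\beta(\psi)=0$: a $\tau$-step out of a state $R$-related to $n$ again lands in a state $R$-related to $n$, since $n$ has no $\tau$-transitions, so $\psi\mathrel{R}n$ and $\psi\steps{\epsilon}\psi^{*}$ give $\psi^{*}\mathrel{R}n$, whence $\psi^{*}\xrightarrow{c}$ forces $n\xrightarrow{c}$ and thus $n=0$; conversely a reachable state with $\beta=0$ is $R$-related to $0$, which can perform $c$, and that move must be matched.

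Then I would apply Dickson's lemma: the sequence $\xi_{1},\xi_{2},\dots$ of finite multisets over a finite set has indices $1\le i<j$ with $\xi_{i}\subseteq\xi_{j}$; writing $\xi_{j}=\xi_{i}\merge\delta$, fact (ii) gives $\beta(\delta)=\beta(\xi_{j})-\beta(\xi_{i})=j-i\ge 1$, so $\delta$ is nonempty with positive $b$-norm. Since $\xi_{i}\mathrel{R}i$ with $i\ge 1$, iterating the branching bisimulation condition along $i\xrightarrow{b}\cdots\xrightarrow{b}0$ produces a path $\xi_{i}\steps{b^{i}}\eta_{0}$ in $G$ with $\eta_{0}\mathrel{R}0$. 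Replaying exactly these moves on the $\xi_{i}$-component of $\xi_{j}=\xi_{i}\merge\delta$, leaving $\delta$ untouched, gives a reachable state $\eta_{0}\merge\delta$ with $\beta(\eta_{0}\merge\delta)=0+(j-i)\ge 1$. But $\eta_{0}\mathrel{R}0$ can weakly perform $c$ by (iii), and replaying those $c$-witnessing moves on the $\eta_{0}$-component shows $\eta_{0}\merge\delta$ can weakly perform $c$ too; by (iii) this forces $\beta(\eta_{0}\merge\delta)=0$, a contradiction.

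I expect the main obstacle to lie in facts (ii) and (iii): the \emph{exact} additivity of the $b$-norm over parallel composition (the ``$\le$''-direction is where the commutative-context-free nature of \BPP$^{\dl\emp}$ is essential), and the characterisation of $c$-enabledness as ``$\beta=0$'', which pins down the single feature of $G_M$ --- that $c$ is enabled in exactly the state of minimal $b$-norm --- that \BPP$^{\dl\emp}$ cannot imitate. Once both hold, Dickson's lemma mechanically exhibits a surplus parallel component of positive $b$-norm that can be appended to a $c$-enabled state, which the rigid coupling of $\beta$ and $c$ forbids.
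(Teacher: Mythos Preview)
The paper is an extended abstract and omits the proof of this theorem, so there is no argument in the paper to compare against. Your proposal is correct and follows the natural line one would expect: pass to Greibach Normal Form so that reachable states are finite multisets of identifiers, introduce the additive $b$-norm $\beta$, and use Dickson's lemma to produce a surplus component $\delta$ of positive $b$-norm that can be carried along to a state where $c$ is (weakly) enabled, contradicting the rigid coupling ``$c$ enabled iff $\beta=0$'' forced by the target process.

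Two small points are worth making explicit when you write it up. First, the identification of reachable states with multisets of identifiers uses that $\merge$ is commutative and associative only up to strong bisimilarity; so strictly speaking $\xi_j$ is \emph{strongly bisimilar} to $\xi_i\merge\delta$, not equal, and the ``replayed'' state $\eta_0\merge\delta$ should be replaced by its strongly bisimilar reachable counterpart before invoking fact~(iii). Since $\beta$ and weak $c$-enabledness are strong-bisimulation invariants, this is harmless. Second, in fact~(ii) the ``$\leq$'' direction really does need that \BPP$^{\dl\emp}$ has \emph{no} communication rule for $\merge$, so that every transition of $p\merge q$ projects to a transition of exactly one component; this is precisely where the argument would break for \BCP$^{\dl\emp}$. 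With these routine details made explicit, your proof is complete.
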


Now let us reconsider the parallel pushdown automaton of the bag. The problem is, that a $show(^{-}d)$-action can only occur if no $rem(d)$-action can occur.
Thus, in a sum context, the $show(^{-}d)$ action should have
\emph{lower priority} than the $rem(d)$ action. In general, we assume
we have a partial ordering $<$ on $\mathcal{A} \cup \{\tau\}$, where
$a < b$ means that $a$ has lower priority than $b$, satisfying that
$\tau < a$ never holds, and whenever $a < b$ then also $a < \tau$. The
\emph{priority operator} $\theta$ will implement the priorities, and
is given by the operational rules in Figure~\ref{fig:prior}, see
\cite{BBK86,BBR10}. Notice that the second rule for the priority
operator uses a negative premise. Transition system specifications
with negative premises may, in general, not define a unique transition
relation that agrees with provability from the transition system
specification, but our restriction to weakly guarded specifications
eliminates this problem \cite{Gro93, BG96,Gla04}. Also, note that
(rooted) branching bisimilarity is not compatible with the priority operator, but
divergence-preserving branching bisimilarity is \cite{FGL19}.

\begin{figure}[htb]
  \centering
  \begin{osrules}
     \osrule*{p \downarrow}{\theta(p) \downarrow}
  \qquad
    \osrule*{p\step{a} p' & \forall b>a \quad p\not\step{b}}{\theta(p) \step{a} \theta(p')} 
    \\
      \osrule*{p\step{a}p'}{\rho_f(p) \step{f(a)} \rho_f(p')}
  \qquad
  \osrule*{\term{p}}{\term{\rho_f(p)}}
  \end{osrules}
\caption{Operational semantics for priorities and renaming.}
\label{fig:prior}
\end{figure}

With the help of this operator, we can give the following specification of the always accepting bag, assuming $show(^{-}d) <rem(d)$.
\begin{equation*}
    \mathit{ABag} \defeqn \emp + \sum_{d \in \mathcal{D}} \pref[ins(d)]{\theta(\mathit{ABag} \merge (\emp + rem(d).\emp)) + \sum_{d \in \mathcal{D}} show(^{-}d).\mathit{ABag}} \enskip.
  \end{equation*}
For the parallel pushdown automaton in Figure~\ref{fig:ppda.ac}, it is enough to take $c < b$.
In general, we need to put a priority ordering on the labels of a parallel pushdown automaton. This may not be possible if some labels are the same, or if a $\tau$ occurs as a label. Therefore, we need to ensure all the labels in the parallel pushdown automaton are distinct and from $\mathcal{A}$, in order to be able to impose a priority ordering.

Thus, given a parallel pushdown automaton, we consider another parallel pushdown automaton with distinct labels, solve the problem for that automaton, and then rename the labels again to their original values. This renaming is done by a \emph{renaming operator} $\rho_f$, where $f$ is any function on $\mathcal{A} \cup \{\tau\}$ satisfying $f(\tau) = \tau$. The renaming operator has the operational rules shown in Figure~\ref{fig:prior}, see \cite{BB88,BBR10}.

Now we extend \BPP$^{\dl\emp}$ to include the priority operator and
renaming operators. We call this extended algebra
\BPP$^{\dl\emp}_{\theta}$. Theorem~\ref{bpptoppda} can be extended to
\BPP$^{\dl\emp}_{\theta}$.

  \begin{thm} \label{bpptoppdatheta}
Every weakly guarded recursive specification over \BPP$^{\dl\emp}_{\theta}$ has a process graph that is divergence-preserving branching bisimilar to the process graph of a parallel pushdown automaton.
\end{thm}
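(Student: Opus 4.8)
The plan is to show that the parallel pushdown automaton construction underlying Theorem~\ref{bpptoppda} survives the addition of the two new operators. The \emph{renaming} operator is the easy case: $\rho_f$ distributes over action prefixing, choice and parallel composition (in particular $\rho_f(p \merge q) \bisim \rho_f(p) \merge \rho_f(q)$ and $\rho_f(\pref[a]{p}) \bisim \pref[f(a)]{\rho_f(p)}$), and the functions obtained by composing the finitely many renaming functions that occur in $\Gamma$ form a finite monoid. Hence renaming operators can be pushed inward and absorbed into finitely many fresh identifiers, the only place where $\rho_f$ cannot be pushed being immediately in front of a $\theta$; after this preprocessing the only genuinely new ingredient is the priority operator, possibly wrapped in a renaming, and it suffices to extend the construction of Theorem~\ref{bpptoppda} to deal with $\theta$.

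For the \emph{priority} operator the difficulty is that $\theta$ does not distribute over $\merge$: whether a summand of one parallel component is suppressed depends on the set of actions enabled by all components in the same $\theta$-scope, so the information governing $\theta$ is non-local. I would first collect a small kit of laws valid up to $\bisim$ (hence up to $\bbisim^{\Delta}$): idempotence $\theta(\theta(p)) \bisim \theta(p)$; the migration law $\theta(\theta(p) \merge q) \bisim \theta(p \merge q)$; $\theta(\dl) \bisim \dl$ and $\theta(\emp) \bisim \emp$; and the fact that the head-normal-form computation of Theorem~\ref{thm:commhnf} extends through $\theta$ and $\rho_f$, so that $\theta$ applied to a head normal form $(\emp +)\sum_{i} \pref[a_i]{p_i}$ equals $(\emp +)\sum_{i \in I} \pref[a_i]{\theta(p_i)}$, where $I$ retains exactly the $i$ for which no summand carries an action of higher priority than $a_i$. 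Using these laws one brings $\Gamma$ into a Greibach-style normal form in which every reachable term is, up to $\bisim$, a parallel composition of identifiers and $\theta$-blocks, each $\theta$ freshly created by a transition being collapsed into its enclosing one by the migration law. The crucial claim to establish is then that, although such terms grow without bound, only a \emph{finite} summary of a reachable term is needed, both to resolve all the enclosing $\theta$'s (i.e.\ to know which outgoing summands are enabled) and to maintain the $\merge/\theta$-structure up to $\bbisim^{\Delta}$ --- essentially the multiset of identifiers the term contains, held in the bag, together with, for each $\theta$-scope, the \emph{set} (not multiset) of identifiers occurring in it.

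Granting this, the automaton is defined as in Theorem~\ref{bpptoppda}: the data alphabet is the set of identifiers, the bag holds the current multiset of identifiers, and the control state records the finite scope summary above. Each transition is obtained from a priority-filtered head-normal-form summand $\pref[a]{\xi}$ of an identifier present in the bag --- included only when, according to the control state, $a$ is not dominated in the pertinent scope --- by removing the consumed identifier, inserting the identifiers of $\xi$, and updating the scope summary. Since a ``get'' does not reveal whether the removed identifier still occurs, it is followed --- exactly as with the auxiliary states $N_X$ in the proof of Theorem~\ref{bpptoppda} --- by an auxiliary control state from which a single, inert $\tau$-transition tests membership of that identifier in the bag and accordingly either keeps it in, or deletes it from, the relevant set. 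One finally exhibits the evident relation between the states of this automaton and the reachable terms of $\Gamma$ and checks that it is a divergence-preserving branching bisimulation.

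The main obstacle I anticipate is the priority operator combined with the divergence-preservation requirement: proving that a finite control genuinely suffices --- isolating the correct finite summary, showing it is preserved by the transitions, and proving it resolves all priority choices soundly and completely despite the non-local character of $\theta$ and the unbounded growth of $\theta$-scoped subterms --- and, at the same time, arranging all the bookkeeping $\tau$-steps so that they are inert, so that no spurious divergence or $\tau$-branching is introduced that would be visible up to $\bbisim^{\Delta}$ rather than merely up to branching bisimilarity or language equivalence.
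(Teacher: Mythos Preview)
The paper offers no proof for this theorem at all; it merely states that ``Theorem~\ref{bpptoppda} can be extended to \BPP$^{\dl\emp}_{\theta}$'' and moves on, so your proposal is already more detailed than anything in the extended abstract.  That said, there is a genuine gap in your argument, and it sits exactly where you suspected.

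Your migration law $\theta(\theta(p)\merge q)\bisim\theta(p\merge q)$ is fine, and it does guarantee that \emph{nested} priority scopes collapse: once we are inside an outermost $\theta$, any further $\theta$'s produced by transitions are absorbed, and every reachable state is $\theta(\xi)$ for a multiset $\xi$ of identifiers.  For such specifications your construction works: record the \emph{set} of identifiers in the (single) scope in the control state, apply the priority filter there, and use the $N_X$-style auxiliary states to repair the set after a removal.  The trouble is with \emph{parallel} $\theta$-scopes that are not enclosed by a common $\theta$.  Consider $X \defeqn \pref[a]{(X \merge \theta(Y))}$ with a suitable $Y$: after $k$ $a$-steps the reachable term is $X \merge \theta(Y)\merge\cdots\merge\theta(Y)$ with $k$ independent scopes, none of which is enclosed by another, so the migration law does not apply.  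Your control state is supposed to store ``for each $\theta$-scope, the set of identifiers occurring in it'', but the number of scopes is unbounded, so this is not finite information.  Worse, even if the number of scopes were bounded, your bag holds only the \emph{global} multiset of identifiers; a ``get'' cannot tell from which scope the retrieved identifier came, so the control state cannot maintain the per-scope sets correctly, and the priority filter applied in one scope would wrongly see identifiers that actually live in another.  In short, the step ``only a finite summary of a reachable term is needed'' is exactly the heart of the matter, and your summary is not finite (and not reconstructible from the bag plus a finite control) once several independent $\theta$-scopes can be spawned at the top level.

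To repair the argument you would need either a normal-form result that rules out unboundedly many parallel top-level $\theta$-scopes (e.g.\ by pushing an outermost $\theta$ over the entire term, which in general changes the semantics and so requires a compensating renaming trick), or an encoding that tags bag elements with enough scope information using only a finite data alphabet.  Neither is supplied, and neither is obvious; the paper, for its part, does not indicate how this is to be done.
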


In the other direction, it works for every one-state parallel pushdown automaton.

\begin{thm}\label{onestatethm}
For every one-state parallel pushdown automaton  there is a finite weakly guarded \BPP$^{\dl\emp}_{\theta}$ specification such that their process graphs are divergence-preserving branching bisimilar.
\end{thm}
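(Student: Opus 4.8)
The plan is to generalise the specification $\mathit{ABag}$ displayed shortly before the theorem: represent the bag by a parallel composition of ``tokens'' $B_d$, one per copy of a data symbol $d$ currently in the bag, together with a ``control'' identifier that performs the failed-get steps $a[-d/x']$, and use the priority operator $\theta$ to switch each such step off precisely when the symbol it tests is present. First I carry out the preprocessing announced in the text before the theorem: from the one-state automaton $M=(\{{\uparrow}\},\mathcal{A},\mathcal{D},{\rightarrow},{\uparrow},{\downarrow})$ I pass to $M'$, in which every transition is relabelled by a fresh pairwise-distinct action of $\mathcal{A}$, and I fix the renaming $f$ sending each fresh action back to its original label (with $f(\tau)=\tau$), so that applying $\rho_f$ to $\mathcal{P}(M')$ yields a process graph isomorphic to $\mathcal{P}(M)$. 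If I build a finite guarded specification for $M'$ with root $Z'$, the specification for $M$ adds a new root $R\defeqn(\emp +)\sum_i f(b_i).\rho_f(\theta(\xi_i))$, obtained by unfolding $\rho_f(Z')$ once so that $R$ is guarded; since $\rho_f$ distributes over $+$, $\merge$ and action prefixing and is compatible with $\bisim$ (but cannot be pushed through $\theta$, hence must sit outside it), the process graph of $R$ is strongly bisimilar to $\rho_f$ applied to that of $Z'$, which is isomorphic to $\mathcal{P}(M)$; and a strong bisimulation is in particular a divergence-preserving branching bisimulation. So it suffices to treat $M'$.

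Call $d$ \emph{removable} if $M'$ has a transition $b[+d/x']$. I first treat the case in which every symbol occurring in the output multiset of some transition is removable — which already covers the bag of Figure~\ref{fig:bagalways}. Take one token $B_d$ for each removable $d$ and a single control $Z$ (also the root), and set
\[ B_d \defeqn (\emp +)\sum_{b[+d/x']} b.\,\theta\bigl(\merge_{e\in x'}B_e\bigr), \qquad Z \defeqn (\emp +)\sum_{a[-d/x']} a.\,\theta\bigl(\merge_{e\in x'}B_e\merge Z\bigr), \]
with the $\emp$ summand present iff ${\uparrow}\in{\downarrow}$. For each removable $d$ fix a \emph{witness} $b_d$, the label of one of its get-transitions, and let the priority ordering be generated by $a<b_d$ for every failed-get label $a$ of a transition $a[-d/x']$ (this forces $a<\tau$, which is harmless as the construction uses no $\tau$). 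Since the labels are distinct and $B_d$ always offers $b_d$, the operator $\theta$ disables the summand $a.\theta(\cdots)$ of $Z$ exactly when a token $B_d$ is present — that is, exactly when $d$ is in the bag — and never disables a get-transition, no label lying above a get-label; moreover nested occurrences of $\theta$ can be flattened, since an inner $\theta$ can block only failed-get transitions and only in the situations where the outer one does too. The resulting specification is finite and (weakly, in fact strongly) guarded.

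One then checks that the relation
\[ R = \bigl\{\,\bigl(({\uparrow},y),\ \theta(\merge_{d\in y}B_d\merge Z)\bigr) : y\in\mathcal{D}^{\lbag\rbag}\,\bigr\}, \]
closed under identity and under flattening of nested $\theta$'s (the flattening holds by the previous remark and congruence of $\bisim$), is a strong bisimulation between $\mathcal{P}(M')$ and the process graph of the specification: the step $({\uparrow},{\lbag d\rbag}\cup w)\xrightarrow{b}({\uparrow},x'\cup w)$ is matched by the $b$-summand of a token $B_d$; the step $({\uparrow},w)\xrightarrow{a}({\uparrow},x'\cup w)$, present iff $d\notin w$, is matched by the $a$-summand of $Z$, enabled iff no token $B_d$ is present; and acceptance matches because $Z$ has an $\emp$ summand iff ${\uparrow}\in{\downarrow}$. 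Being a strong bisimulation, $R$ witnesses that $\mathcal{P}(M')$ and the specification's process graph are related by $\bbisim^{\Delta}$; composing with the preprocessing step gives the theorem for such $M$, and for the always accepting bag this reconstructs exactly the equation $\mathit{ABag}$.

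In general a symbol may be produced by some transition but consumable by none. Its presence in the bag is then a \emph{monotone} property, so the control only needs to remember the set $S$ of such symbols currently present — finitely many subsets — and one replaces $Z$ by identifiers $Z_S$. The delicate point, which I expect to be the main obstacle, is the interplay with the tokens: a get-transition $b[+d/x']$ performed by a token $B_d$ may insert a non-removable symbol, which ought to move the control from $Z_S$ to $Z_{S'}$ with $S'\supsetneq S$, but a parallel component cannot change the control by itself; and a non-removable symbol, having no get-transition, offers no witness action for priority-based presence detection. The way out is to route every insertion of a non-removable symbol through the control component, exploiting the monotonicity of $S$ so that the extra book-keeping steps can be arranged to be inert $\tau$-steps; the remainder of the verification is then exactly as in the removable case.
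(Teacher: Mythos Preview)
The paper is an extended abstract and omits the proof of this theorem, so no direct comparison is possible; your approach does follow the template set by the $\mathit{ABag}$ specification and by the relabel-then-rename preprocessing described just before the theorem, and the removable case is carried out correctly (the nested-$\theta$ flattening works because tokens offer only get-actions, which are maximal in your order, so an inner $\theta$ never blocks anything the outer one would allow).

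The gap is the non-removable case, and your sketch does not close it. The phrase ``route every insertion of a non-removable symbol through the control component'' presupposes a mechanism that $\mathrm{BPP}^{\dl\emp}_{\theta}$ does not have: there is no communication, so when a token $B_e$ performs $b[+e/x']$ with a non-removable $d\in x'$, it simply cannot inform the separate component $Z_S$ that $S$ must grow --- a $\tau$-step, inert or not, is performed by one parallel component in isolation and cannot consume or modify another. Monotonicity of $S$ does not help here: it would let the control \emph{remember} an update, but gives no way to \emph{receive} one. The natural workaround --- give $B_d$ a dummy self-loop $v_d$ with $a[-d/\cdot] < v_d$ so that its mere presence blocks the failed-get, and then suppress $v_d$ via further priorities and a final renaming to $\tau$ --- also fails: in any configuration where the automaton is genuinely deadlocked (these exist precisely when no data symbol has both a $+$-transition and a $-$-transition), every real action is blocked, the dummy action surfaces, and after renaming you obtain a $\tau$-divergence where the automaton has none, destroying divergence-preserving branching bisimilarity. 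So a different device is needed to detect the presence of non-removable symbols, and the last paragraph does not supply it.
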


Thus, for all one-state parallel pushdown automata we can find a specification in \BPP$^{\dl\emp}_{\theta}$. This result does not extend to parallel pushdown automata with more than one state.

\begin{figure}[htb]
\begin{center}
\begin{tikzpicture}[->,>=stealth',node distance=4cm, node font=\footnotesize, state/.style={circle, draw, minimum size=.5cm,inner sep=0pt}]
  \node[state,initial,initial text={},initial where=above] (s0) {};
  \node[state,accepting] [right of=s0] (s1) {};
  
  \path[->]
  (s0) edge[in=225,out=135,loop]
         node[left] {$\begin{array}{c}a[-1/\lbag1\rbag]\\
                                      a[+1/\lbag1,1\rbag]\\
                                      b[+1/\emptyset] \end{array}$} (s0)
  (s0) edge node[above] {$c[+1/\emptyset]$} (s1)
  (s1) edge[in=-45,out=45,loop] node[right] {$d[+1/\emptyset]$} (s1);
\end{tikzpicture}
\end{center}
\caption{Parallel pushdown automaton that cannot be specified in \BPP$^{\dl\emp}_{\theta}$.}\label{fig:tegenvb}
\end{figure}

\begin{thm} \label{nospec}
There is a parallel pushdown automaton with two states, such that there is no weakly
guarded \BPP$^{\dl\emp}_{\theta}$ specification with the same process.
\end{thm}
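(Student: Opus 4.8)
The plan is to show that the parallel pushdown automaton in Figure~\ref{fig:tegenvb} has no finite weakly guarded \BPP$^{\dl\emp}_{\theta}$ specification by combining the negative result already available for the one-state ``counter with a change'' (Theorem~\ref{thm:one-statenospec}) with the structural obstruction caused by the second state. The automaton of Figure~\ref{fig:tegenvb} behaves, as long as no $c$ has been performed, exactly like the counter of Figure~\ref{fig:ppda.ac} with $a,b$ in the roles of increment/decrement; but at \emph{any} bag content it can fire a $c$ and move to the accepting state where only $d$-loops remain. So the process graph consists of a counter ``spine'' $(\,\uparrow,\lbag 1\rbag^n)$ for $n\geq 0$, each node of which has an outgoing $c$ to a node $(\,\downarrow,\lbag 1\rbag^n)$, and each of those $\downarrow$-nodes is accepting and has a $d$-loop (in fact all the $\downarrow$-nodes are divergence-preserving branching bisimilar to a single accepting state with a $d$-loop, call it $D$). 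The key external observation is that for every $n$, the $n$-th spine node can reach $D$ by exactly $n$ consecutive $b$-steps followed by nothing else leading back into the spine — i.e.\ the spine nodes are pairwise non-bisimilar, and moreover from the $n$-th spine node one must perform $c$ (or $n$ $b$'s then $c$) to become accepting, so the spine exposes an unbounded ``$b$-norm'' that is visible \emph{after} a $c$ as well.

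First I would argue that, towards a contradiction, a finite weakly guarded \BPP$^{\dl\emp}_{\theta}$ specification $\Gamma$ with this process exists; by Theorem~\ref{thm:commhnf} and bringing $\Gamma$ into Greibach Normal Form (inside the priority/renaming operators the head-normal-form computation still goes through on weakly guarded specifications, as used for Theorem~\ref{bpptoppdatheta}) we may assume every reachable state is, up to $\bbisim^{\Delta}$, of the form $\theta$-contexts over parallel compositions of finitely many identifiers, with a bounded set of identifiers $\mathcal{Q}$ and a bound $k$ on the number of distinct actions and on branching degree. Next I would isolate the spine: since the spine nodes $s_n$ (the ones reached from the root by $a^n$) are pairwise non-$\bbisim^{\Delta}$ and infinite in number, by a pigeonhole/pumping argument on the finitely many identifiers there must be two spine states $s_m \bbisim^{\Delta} \xi \merge \eta$ and $s_{m+p} \bbisim^{\Delta} \xi \merge \eta'$ with $\eta \steps{\tau^{*}\!/\!?}$-reachable growth, i.e.\ the ``extra'' behaviour between them is carried by a parallel component whose iterated copies realise the counter — exactly the configuration that Theorem~\ref{thm:one-statenospec} (equivalently, the analogous argument for the always-accepting bag) rules out for a \emph{one-state} automaton. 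The essential extra point needed here, which the one-state case does not supply, is that the $c$-transition is available uniformly at \emph{all} spine nodes and leads to a state whose \emph{acceptance and residual behaviour does not depend on} $n$; this pins down that the unbounded information distinguishing the spine nodes lives in a parallel component that contributes no visible behaviour after $c$, and a weakly guarded \BPP$^{\dl\emp}_{\theta}$ term simply cannot both maintain an unbounded ``counter register'' in a parallel component and have a single action $c$ whose target collapses that register to one $\bbisim^{\Delta}$-class.

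Concretely, the argument I would run is a norm/rank argument. Define, for a \BPP$^{\dl\emp}_{\theta}$ state $p$, its \emph{$c$-reachability profile} to be the $\bbisim^{\Delta}$-class of the set of states reachable after a single $c$ (using inertness of $\tau$'s); in the target process this profile is constantly $\{D\}$ (one class), whereas the spine states must have unboundedly many non-bisimilar ``$b$-futures'' before $c$. Now use that in \BPP$^{\dl\emp}_{\theta}$, with Greibach Normal Form, a state that can perform $c$ does so via some identifier summand $a_i.\xi_i$ with $a_i = c$ sitting in one parallel component; the part of the parallel composition responsible for the unbounded $b$-norm is a \emph{different} component $\zeta$, and that component's own behaviour (an unbounded counter) must itself be a \BPP$^{\dl\emp}_{\theta}$-definable process with unboundedly many non-bisimilar states reachable by $b$-labelled and $\tau$-labelled moves alone, all of which are accepting or lead to accepting states only through $b$'s — but such a $\zeta$, read in isolation, is exactly a one-state-parallel-pushdown counter, contradicting Theorem~\ref{thm:one-statenospec}. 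Filling in this reduction — i.e.\ extracting from the two-state specification a bona fide one-state specification for the pure counter, correctly handling the priority operator $\theta$ (which is where the negative premises live, and where one must be careful that restricting the parallel context does not change which high-priority actions are enabled) — is the main obstacle; the rest is the pigeonhole on identifiers plus the bookkeeping that all $\tau$'s introduced or present are inert so $\bbisim^{\Delta}$ coincides with strong bisimilarity on the relevant fragment.
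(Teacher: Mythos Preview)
The paper is an extended abstract and gives no proof of this theorem; it only exhibits the witness automaton of Figure~\ref{fig:tegenvb}. So there is nothing to compare your approach against at the level of detail. That said, your proposal has two genuine gaps that would have to be repaired before it could work.

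First, and most seriously, the reduction to Theorem~\ref{thm:one-statenospec} is a dead end. That theorem is about \BPP$^{\dl\emp}$ \emph{without} the priority operator: it says the one-state counter-with-a-change has no finite weakly guarded \BPP$^{\dl\emp}$ specification. But Theorem~\ref{onestatethm} immediately afterwards asserts that \emph{every} one-state parallel pushdown automaton \emph{does} have a finite weakly guarded \BPP$^{\dl\emp}_{\theta}$ specification. So if you succeed in extracting from your hypothetical \BPP$^{\dl\emp}_{\theta}$ specification a \BPP$^{\dl\emp}_{\theta}$ term $\zeta$ realising a one-state counter, you have produced exactly what Theorem~\ref{onestatethm} predicts, not a contradiction. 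Any proof of Theorem~\ref{nospec} must exploit something that genuinely separates the two-state automaton from all one-state automata even in the presence of~$\theta$.

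Second, your description of the process graph of Figure~\ref{fig:tegenvb} is incorrect in a way that matters. The $c$-transition from $(s_0,\lbag 1^n\rbag)$ goes to $(s_1,\lbag 1^{n-1}\rbag)$, and from $(s_1,\lbag 1^m\rbag)$ one can perform exactly $m$ $d$-steps before reaching the terminal accepting state $(s_1,\emptyset)$. Hence the post-$c$ states are \emph{not} all divergence-preserving branching bisimilar to a single $D$ with a $d$-loop; they are pairwise non-bisimilar, each determined by its $d$-depth. Your central claim---that the information distinguishing the spine nodes ``contributes no visible behaviour after $c$'' and that $c$ ``collapses that register to one $\bbisim^{\Delta}$-class''---is therefore false, and the norm/profile argument built on it does not get off the ground. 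The actual obstruction is rather that the single $c$-step must simultaneously \emph{change the future behaviour} of unboundedly many parallel components (each $1$ in the bag switches from contributing $a/b/c$-behaviour to contributing $d$-behaviour), and in \BPP$^{\dl\emp}_{\theta}$ a prefix can affect only one component; the priority operator does not help because it filters transitions but cannot broadcast a state change. This is precisely what the value-passing communication of \BCP$^{\dl\emp}_{\theta}$ is later introduced to achieve (see the example preceding Theorem~\ref{thm:fullcorr}). A correct argument would have to formalise this ``no broadcast'' limitation directly, not route through the one-state case.
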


In order to recover the correspondence between parallel pushdown automata and parallel process algebra, we need to add communication with value passing. 

\section{Communicating processes}
We extend the basic parallel processes \BPP$^{\dl\emp}$ by adding a communication mechanism. We assume we have a finite set of communication ports $\mathcal{C}$, and that each parametrised action $c(d)$ is the result of the communication of the send action $c!d$ and the receive action $c?d$. The data set $\mathcal{D}$ is finite. 
Define $\mathit{COM}_C =  \{c!d, c?d \mid c \in C, d \in \mathcal{D}\}$ for a set of ports $C \subseteq \mathcal{C}$. The \emph{encapsulation operator} $\encap{C}{}$ will block the send and receive actions from the set of ports $C$. The \emph{abstraction operator} $\tau_C$ will hide all parametrised actions from the set of ports $C$.

\begin{figure}[htb]
  \centering
  \begin{osrules}
\osrule*{p \step{c!d} p' & q \step{c?d} q'}{p \merge  q \step{c(d)} p' \merge q' \quad q \merge p \step{c(d)} q' \merge p'}
 \\
\osrule*{p \downarrow}{\encap{C}{p} \downarrow} \qquad \osrule*{p \step{a} p' & a \not\in \mathit{COM}_C}{\encap{C}{p} \step{a} \encap{C}{p'}} \\
 \osrule*{p \downarrow}{\tau_{C}(p) \downarrow} \qquad \osrule*{p \step{c(d)} p' & c \in C}{\tau_{C}(p) \step{\tau} \tau_C(p')} \qquad \osrule*{p \step{a} p' & a \neq c(d) \mbox{ for } c \in C}{\tau_C(p) \step{a} \tau_C(p')}
   \end{osrules}
\caption{Operational semantics for communication, encapsulation and abstraction.}
\label{fig:semantics-seqc}
\end{figure}

The process algebra \BCP$^{\dl\emp}$ extends \BPP$^{\dl\emp}$ with communication, encapsulation and abstraction; likewise, \BCP$^{\dl\emp}_{\theta}$ extends \BPP$^{\dl\emp}_{\theta}$.
Using communication, we can specify communicating bags: $\mathit{ABag}^{io}$ defines the always accepting bag with input port $i$ and output port $o$, while $\mathit{EBag}^{io}$ defines the bag with input port $i$ and output port $o$ that is only accepting when it is empty. We see both the $rem(d)$ actions and the $show(^{-}d)$ actions as outputs.

 \[   \mathit{ABag}^{io} \defeqn \emp + \sum_{d \in \mathcal{D}} \pref[i?d]{\theta( \mathit{ABag}^{io} \merge (\emp + o!(^{+}d).\emp))} + \sum_{d \in \mathcal{D}} o!(^{-}d).\mathit{ABag}^{io} \]
 \[    \mathit{EBag}^{io} \defeqn \emp + \sum_{d \in \mathcal{D}} i?d.\theta( \mathit{EBag}^{io} \merge o!(^{+}d).\emp) + \sum_{d \in \mathcal{D}} o!(^{-}d).\mathit{EBag}^{io}\enskip. \]
 
 In Section~\ref{characterisation}, we will use the communicating always accepting bag to make the communication between a finite control and a memory in the form of a bag explicit. Here, we restate a classical result: putting two bags with unrestricted capacity in series will again be a bag with unrestricted capacity.
 \[ \mathit{ABag}^{io} \bbisim^{\Delta} \tau_{\{\ell\}}(\encap{\{\ell\}}{\mathit{ABag}^{i\ell} \merge \mathit{ABag}^{\ell o}})  \qquad
\mathit{EBag}^{io} \bbisim^{\Delta} \tau_{\{\ell\}}(\encap{\{\ell\}}{\mathit{EBag}^{i\ell} \merge \mathit{EBag}^{\ell o}}) \]
 
 Again, we can bring every \BCP$^{\dl\emp}_{\theta}$-term into head normal form.

\begin{thm}
Let $\Gamma$ be a weakly guarded \BCP$^{\dl\emp}_{\theta}$-specification. Every process expression $p$ can be brought into \emph{head normal form}, i.e. there are $a_i \in \mathcal{A} \cup \{\tau\}$ and process expressions $p_i$ such that
\begin{equation*}
p \bisim (\emp +) \sum_{i=1}^n a_i.p_i
\end{equation*}
where the $\emp$ summand may or may not occur.
\end{thm}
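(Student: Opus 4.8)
The plan is to follow the proof strategy of Theorem~\ref{thm:commhnf} and extend it with cases for the four additional operators $\theta$, $\rho_f$, $\encap{C}{}$ and $\tau_C$ of \BCP$^{\dl\emp}_{\theta}$. Throughout we work, as there, with strong bisimilarity $\bisim$, which is a congruence for all operators of the algebra: the positive rules are in path format \cite{BV93}, and the priority and renaming operators are standard \cite{BBK86,BB88,BBR10}; the negative premise of $\theta$ causes no trouble because, for weakly guarded specifications, the transition system specification still determines a unique transition relation \cite{Gro93,BG96,Gla04}.

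First I would reduce to the case in which $p$ contains no \emph{unguarded} occurrence of a process identifier, i.e.\ no occurrence lying outside the scope of every action prefix. By weak guardedness, the right-hand side $q_X$ of each equation $X \defeqn q_X$ has all its identifier occurrences guarded, so replacing every unguarded identifier in $p$ by its defining right-hand side produces a term $U(p)$ that has no unguarded identifiers at all, and a single such pass suffices (the inserted bodies introduce no new unguarded identifiers, and the positions that were guarded in $p$ stay guarded). Since the rules for identifiers give $X \bisim q_X$, congruence of $\bisim$ yields $p \bisim U(p)$, so it is enough to put $U(p)$ into head normal form.

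For terms without unguarded identifiers I would use structural induction, never descending past an action prefix. The base cases $\dl$, $\emp$, $a.q$ are immediate (the body of a prefix is left untouched). The cases $q+r$ and $q\merge r$ go as in Theorem~\ref{thm:commhnf}, except that for $q \merge r$ one must, in addition to the interleaving summands obtained from the head normal forms of $q$ and $r$, include a communication summand $c(d).(q_i \merge r_j)$ for every pair of matching initial actions $a_i = c!d$ and $b_j = c?d$ (and symmetrically), using the communication rule of Figure~\ref{fig:semantics-seqc}; the $\emp$ summand survives iff both $q$ and $r$ have one. The cases $\rho_f(q)$, $\encap{C}{q}$, $\tau_C(q)$ are routine transformations of the head normal form $q \bisim (\emp +)\sum_{i=1}^n a_i.q_i$ supplied by the induction hypothesis: one obtains, respectively, $(\emp +)\sum_i f(a_i).\rho_f(q_i)$ (legitimate since $f(\tau)=\tau$), $(\emp +)\sum_{i\,:\,a_i \notin \mathit{COM}_C} a_i.\encap{C}{q_i}$, and $(\emp +)\sum_i a_i'.\tau_C(q_i)$ with $a_i' = \tau$ if $a_i = c(d)$ for some $c \in C$ and $a_i' = a_i$ otherwise --- in each case a head normal form, with the $\emp$ summand inherited unchanged.

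The delicate case, and the one I expect to be the main obstacle, is $\theta(q)$, because the rule for $\theta$ in Figure~\ref{fig:prior} has a negative premise: $\theta(q) \step{a} \theta(q')$ needs $q \step{a} q'$ and $q \not\step{b}$ for all $b > a$. The key observation is that the head normal form $q \bisim (\emp +)\sum_{i=1}^n a_i.q_i$ exposes the complete set of initial actions of $q$, so the negative premise reduces to a finite check against $a_1,\dots,a_n$; using congruence of $\bisim$ for $\theta$ we then get
\begin{equation*}
\theta(q) \;\bisim\; \theta\bigl((\emp +)\textstyle\sum_{i=1}^n a_i.q_i\bigr) \;\bisim\; (\emp +)\sum_{i\,:\,a_j \not> a_i \text{ for all } j} a_i.\theta(q_i)\enskip,
\end{equation*}
which is again a head normal form ($\theta$ preserves $\downarrow$, so the $\emp$ summand is kept exactly when $q$ has one). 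The point worth being careful about is precisely this interaction of the negative premise with $\bisim$; once it is granted, the remaining verification is the bookkeeping of Theorem~\ref{thm:commhnf} carried over essentially unchanged.
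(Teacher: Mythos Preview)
Your proposal is correct and follows the same approach as the paper: structural induction on $p$, using Milner's Expansion Law with communication for the parallel case. The paper's own proof is a two-line sketch referring to \cite{BBR10}, whereas you spell out the additional cases for $\theta$, $\rho_f$, $\encap{C}{}$ and $\tau_C$ and the unfolding of unguarded identifiers; the content is the same.
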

\begin{proof}
By induction on the structure of $p$ (see \cite{BBR10}). We use Milner's Expansion Law, now with communication.
\end{proof}

As a result, we can bring every guarded recursive specification into Greibach Normal Form.

\begin{defi}
A recursive specification $\Gamma$ over \BCP$^{\dl\emp}_{\theta}$ is in \emph{Greibach Normal Form} if every equation has the form 
\begin{equation*}
X \defeqn (\emp +) \sum_{i=1}^n a_i.\tau_C(\encap{C}{\theta(\xi_i)}).
\end{equation*}
for actions $a_i \in \mathcal{A} \cup \{\tau\}$, where each $\xi_i$ is a parallel composition of identifiers of $\Gamma$, and $n \geq 0$.
\end{defi}

\begin{thm}
Let $\Gamma$ be a weakly guarded \BCP$^{\dl\emp}_{\theta}$-specification over identifiers $\mathcal{P}$. Then there is a finite set of identifiers $\mathcal{Q}$ with $\mathcal{P} \subseteq \mathcal{Q}$ and a recursive specification in Greibach Normal Form $\Delta$ over identifiers $\mathcal{Q}$ such that for all $X,Y \in \mathcal{P}$ we have $X \bisim Y$ with respect to $\Gamma$ if, and only if, $X \bisim Y$ with respect to $\Delta$. 
\end{thm}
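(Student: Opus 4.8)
The plan is to mimic the proof of the Greibach Normal Form theorem for \BPP$^{\dl\emp}$, now using the head normal form theorem for \BCP$^{\dl\emp}_{\theta}$ proved just above in place of the one for \BPP$^{\dl\emp}$. Everything is with respect to \emph{strong} bisimilarity $\bisim$, so the silent action needs no special treatment and the head normal form theorem applies directly. To compare an identifier read ``with respect to $\Gamma$'' against the same identifier read ``with respect to $\Delta$'', one reads both specifications in a single labelled transition system over a disjoint union of two copies of the identifiers; it then suffices to build $\Delta$ over a finite $\mathcal{Q} \supseteq \mathcal{P}$ so that each $X \in \mathcal{P}$ read via $\Gamma$ is strongly bisimilar to $X$ read via $\Delta$, whence the claimed biconditional follows from transitivity and symmetry of $\bisim$.

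First I would bring each equation $X \defeqn \Gamma(X)$ into head normal form, $X \bisim (\emp +) \sum_{i=1}^n a_i . p_i$. The goal is then to replace each residual $p_i$, up to $\bisim$, by an expression of the required shape $\tau_C(\encap{C}{\theta(\xi_i)})$ with $\xi_i$ a parallel composition of identifiers. As in the \BPP case, I would reserve a fresh identifier for each ``irreducible'' residual fragment that arises, define it by its own head normal form, and thereby rewrite every residual into the fixed operator context $\tau_C(\encap{C}{\theta(\cdot)})$ applied to a parallel composition of these identifiers; the resulting equations are then in Greibach Normal Form, and $\mathcal{Q}$ is $\mathcal{P}$ together with the auxiliary identifiers. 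The invariants that each auxiliary identifier is $\bisim$ to the fragment it names and that each $X \in \mathcal{P}$ is $\bisim$ across $\Gamma$ and $\Delta$ would be established by a routine bisimulation, using that $\bisim$ is a congruence for $\merge$, $\theta$, $\encap{C}{\cdot}$ and $\tau_C$ and that every rewriting step preserves $\bisim$.

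The main obstacle is to guarantee that only finitely many auxiliary identifiers are needed. In the \BPP setting this is the essential point but it comes out cleanly: the residual after a prefix is, up to $\bisim$, a parallel composition of subterms of $\Gamma$, and since a parallel composition of identifiers is freely permitted in a Greibach residual, only the parallel-indecomposable non-identifier constituents — finitely many, being subterms of $\Gamma$ — need fresh names, while the reachable state space may remain infinite without harming finiteness of the specification. The priority and communication operators spoil this simplicity: a Greibach residual is forced to carry the wrapper $\tau_C(\encap{C}{\theta(\cdot)})$, and such a wrapper around a parallel composition does \emph{not} decompose into wrappers of its components — communication crosses $\encap{C}{\cdot}$, $\theta$ does not distribute over $\merge$, and $\theta(p)$ and $p$ need not be strongly bisimilar once the priority ordering is non-trivial. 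One must therefore show that, starting from $\Gamma$ and iterating head normal form, the residual fragments that appear — a parallel multiset of subterm fragments together with a word of unary wrapper operators over the finite set $\{\theta, \encap{C}{\cdot}, \tau_C \mid C \subseteq \mathcal{C}\}$ — stay within a finite collection modulo $\bisim$ and the collapsing laws $\theta(\theta(p)) = \theta(p)$, $\encap{C}{\encap{C'}{p}} = \encap{C \cup C'}{p}$ and $\tau_C(\tau_{C'}(p)) = \tau_{C \cup C'}(p)$. I expect this to rest on weak guardedness, which bounds how far one must unfold to reach a head normal form and hence bounds the nesting depth of the wrapper words, together with a careful up-front normalisation of the wrapper contexts; making this precise — in particular taming the $\theta$-layers that reappear whenever a residual $\theta(p)$ must be re-expanded rather than kept verbatim — is where I expect the real work to lie.
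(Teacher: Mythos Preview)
The paper is an extended abstract and supplies no proof for this theorem; it is asserted in parallel with the earlier \BPP$^{\dl\emp}$ Greibach Normal Form result, immediately after the \BCP$^{\dl\emp}_{\theta}$ head-normal-form theorem (whose own proof is a one-line pointer to \cite{BBR10} and the Expansion Law). There is therefore nothing to compare your attempt against beyond the evident structural analogy, and your plan---lift the \BPP argument via the new head normal form, name residual fragments by fresh identifiers, and verify that only finitely many are needed---is exactly the route the paper's organisation signals.

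Your diagnosis of the obstacle is accurate and not over-cautious. In the \BPP case the one-step residual of a parallel composition of identifiers is again such a composition (possibly with subterms in place of some identifiers), so naming the finitely many subterms of $\Gamma$ already suffices. Here the unary wrappers $\theta$, $\encap{C}{\cdot}$, $\tau_C$ are context-preserving but do not distribute over $\merge$, so residuals carry wrapper--merge skeletons, and the Greibach target $\tau_C(\encap{C}{\theta(\xi)})$ fixes a \emph{particular} outer wrapper order that a residual such as $\theta(\encap{C}{\xi})$ does not match without further naming. Weak guardedness does bound how deep a single step can reach---only one identifier is unfolded per step---so the skeletons arising as one-step residuals of subterms form a finite set; naming those (rather than trying to collapse arbitrary wrapper words with your idempotence laws) is the cleaner route. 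But showing that this naming \emph{closes}, i.e.\ that one-step residuals of the newly named terms are again expressible in the target shape over the same finite name set, is genuine work that neither you nor the paper carries out.
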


\section{The full correspondence}

With the help of value-passing communication, we can now establish our main result: for every parallel pushdown automaton we can find a specification in \BCP$^{\dl\emp}_{\theta}$. The communication actions will pass on the information of the current state of the parallel pushdown automaton. Let us look at the parallel pushdown automaton in Figure~\ref{fig:tegenvb}, that did not have a finite specification in \BCP$^{\dl\emp}_{\theta}$.

\begin{exa}
Consider the parallel pushdown automaton in Figure~\ref{fig:tegenvb}, with initial state $s$ and accepting state $t$. We need to distinguish between the two $a$-actions on state $s$, let us call them $a^{-}$ and $a^{+}$.
Action $a^{-}$ has lower priority than $a^{+}$.
We just need to communicate in which of the states we are, so we use actions $p!s, p!t, p?s$ and $p?t$ for some communication port $p$. Actions $p(s),p(t)$ have the highest priority.
As all components in a parallel composition need the state information, we need to communicate the state information repeatedly, until all components are brought into the right position. 
After this, we need to exit the communication process. Define $P_s \defeqn \emp + p!s.P_s + exit.\emp$ and $P_t \defeqn \emp + p!t.P_t + exit.\emp$ and $p(s) > exit, p(t) > exit$ and $exit > e$ for $e \in \{a^{+},a^{-},b,c,d \}$.
\[ S \defeqn a^{-}.\tau_p(\encap{p}{\theta(P_s \merge X_0 \merge X_1)}) \]
\[ X_1 \defeqn p?s.(a^{+}.(P_s \merge X_1 \merge X_1) + b.P_s + c.P_t) + p?t.(\emp + d.P_t)  \]
\[ X_{0} \defeqn p?s.a^{-}.(P_s \merge X_1) + p?t.\emp \]
\end{exa}

\begin{thm}\label{thm:fullcorr}
For every parallel pushdown automaton there is a finite weakly guarded specification over \BCP$^{\dl\emp}_{\theta}$ such that their process graphs are divergence-preserving branching bisimilar.
\end{thm}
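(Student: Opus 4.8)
The plan is to generalise the construction illustrated by the preceding example to an arbitrary parallel pushdown automaton $M=(\mathcal{S},\mathcal{A},\mathcal{D},{\rightarrow},{\uparrow},{\downarrow})$, obtaining a finite weakly guarded \BCP$^{\dl\emp}_{\theta}$-specification whose process graph is divergence-preserving branching bisimilar to $\mathcal{P}(M)$. Two preprocessing steps come first. First, since the priority operator can only usefully order pairwise distinct non-$\tau$ labels, I relabel every transition of $M$ with a fresh action, perform the construction for the relabelled automaton, and compose the result with a renaming operator $\rho_f$ that sends the fresh labels back to their originals and $\tau$ to $\tau$; as $\rho_f$ is compatible with $\bbisim^{\Delta}$, this is harmless. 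Second, using the automaton's ability to test a data element for absence — by the very mechanism already used in the reverse direction, in which one moves to an auxiliary state, puts the element back, and thereby detects whether it was still present, with the additional steps inert (cf.\ the proof of Theorem~\ref{bpptoppda}) — I transform $M$ into an equivalent automaton whose finite control records, for each $d\in\mathcal{D}$, whether $d$ currently occurs in the bag. After this normalisation the only absence tests that need priority are those internal to the auxiliary states, and each is accompanied, in the same state, by a put-back $(+d)$-step of strictly higher priority, while the original $(-d)$-transitions leave only control states recording $d$ as absent; this is what makes absence tests implementable by $\theta$ and rules out the degenerate configurations that would otherwise force a spurious divergence.

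For the normalised automaton the specification uses: for every control state $s$ a \emph{coordinator} $P_s$, which repeatedly broadcasts $s$ over a port $p$ and, once nothing can receive, performs a clean-up step (turned into $\tau$ by abstraction), carrying an $\emp$ summand when $s\in{\downarrow}$; and, for every $d\in\mathcal{D}$, a \emph{unit} $X_d$ and a \emph{sentinel} $Z_d$, each of which first receives the current control state $s$ over $p$ and then, respectively, offers the $(+d)$- and the $(-d)$-transitions out of $s$ — each leading to the parallel composition of a fresh coordinator, the units created by that transition, and, for the sentinel, a fresh sentinel — while also keeping a summand allowing it to receive a \emph{new} control state, and carrying an $\emp$ summand when $s\in{\downarrow}$. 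A unit stands for one copy of $d$ in the bag; sentinels are always present. All these bodies are wrapped in an encapsulation over $\{p\}$, an abstraction hiding $p$ (and the clean-up action), and the priority operator $\theta$, so that $\theta$ is always in force and unmatched sends on $p$ are blocked; the priority ordering places the synchronised $p$-actions strictly above the clean-up action, the clean-up action strictly above every relabelled transition label, keeps the presence-over-absence orderings supplied by the normalisation, and is otherwise empty. The initial identifier directly encodes the finitely many transitions enabled from $\uparrow$ with the empty bag, its bodies being parallel compositions of a coordinator, the sentinels, and the units created by that transition.

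To prove $\bbisim^{\Delta}$ I would exhibit a relation pairing each reachable configuration $(s,x)$ of $\mathcal{P}(M)$ both with the ``settled'' encoding, in which the coordinator and all present units and sentinels already hold $s$, and with the transient configurations occurring during a broadcast round or immediately after a visible step (where a stale coordinator is still being retired). The checks are: every $p$-synchronisation and every clean-up step is $\tau$-inert, because a component in the middle of receiving, and a coordinator awaiting retirement, offer nothing else, and because the priority ordering forces all of these internal steps to precede any visible step; the visible steps of the settled encoding are in exact correspondence with the transitions of $(s,x)$, using that, after the normalisation, a unit $X_d$ is present precisely when $d\in x$ and that the presence-over-absence priorities then block exactly the disabled $(-d)$-moves; and divergences are preserved, because a broadcast round together with the retirement of stale coordinators takes only finitely many steps, so that the encoding admits an infinite internal path from the image of $(s,x)$ if and only if $M$ admits an infinite $\tau$-labelled path from $(s,x)$. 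The point I expect to be the main obstacle is precisely the correctness of this unbounded coordination: one must pin down invariants guaranteeing that no unit ever performs a transition while holding a stale control state, that a retired coordinator can never re-synchronise with a freshly created unit, and that the extra internal steps never form an infinite path that $M$ does not have — and it is exactly the combination of value-passing communication (to carry the state to unboundedly many parallel components) with the priority operator (to schedule the coordination ahead of visible behaviour and to realise absence tests) that makes this possible.
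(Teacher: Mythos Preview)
Your proposal is correct and follows essentially the same approach as the paper. The paper, being an extended abstract, gives no proof for this theorem beyond the preceding example, and that example already contains all the ingredients you use: a coordinator $P_s$ per control state that repeatedly sends $s$ over a dedicated port $p$ and then performs an $\mathit{exit}$ step; a unit process per copy of a data element that first receives the current state and then offers the $+d$-transitions; a sentinel component (called $X_0$ in the example) offering the $-d$-transitions; and a priority ordering with the $p$-communications on top, then $\mathit{exit}$, then the (relabelled) transition labels, together with $+d$-labels above the corresponding $-d$-labels. Your use of the renaming operator $\rho_f$ to undo the relabelling is exactly what the paper already announces when introducing $\rho_f$ in the \BPP$^{\dl\emp}_{\theta}$ setting.

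The one genuine addition in your proposal is the normalisation step in which the finite control is first enriched to record, for each $d$, whether $d$ is currently present in the bag, using the put-back mechanism from the proof of Theorem~\ref{bpptoppda}. This is not in the paper's example, but it is a sound and arguably necessary refinement: in the example the direct priority ordering $a^{-}<a^{+}$ works only because every reachable state that has a $-1$-transition also has a $+1$-transition, so a present $X_1$ always offers something that dominates $a^{-}$. For an arbitrary automaton with a state that has a $-d$-transition but no $+d$-transition, a bare unit $X_d$ would offer nothing there and the sentinel's $-d$-label would not be suppressed; your normalisation eliminates exactly this case by ensuring that $-d$-transitions only leave states in which $d$ is recorded as absent. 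So your extra step buys generality at the cost of a larger (but still finite) control, while the paper's example trades on a structural coincidence of the particular automaton.
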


\begin{thm}
For every finite weakly guarded \BCP$^{\dl\emp}_{\theta}$-specification there is a parallel pushdown automaton such that their process graphs are divergence-preserving  branching bisimilar.
\end{thm}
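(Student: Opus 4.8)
The plan is to mimic the constructions behind Theorems~\ref{bpptoppda} and~\ref{bpptoppdatheta}, adding the ingredients needed to deal with communication, encapsulation and abstraction. First I would apply the Greibach Normal Form theorem for \BCP$^{\dl\emp}_{\theta}$ to assume, without loss of generality, that $\Gamma$ is already in Greibach Normal Form, with a single fixed port set $C$ (the union of all ports occurring) and a single priority ordering, so that every equation reads $X \defeqn (\emp +) \sum_i a_i.\tau_C(\encap{C}{\theta(\xi_i)})$ with each $\xi_i$ a parallel composition of identifiers. The heart of the argument is then a structural analysis of the process expressions reachable from the initial identifier: I would show that each such expression is, up to strong bisimilarity, of the form $\tau_C(\encap{C}{\theta(q_1 \merge \cdots \merge q_k)})$, where each $q_j$ is an identifier or a recursively reachable ``sealed'' subterm $\tau_C(\encap{C}{\theta(\eta)})$. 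Two observations keep this finite-state. First, $\theta(p \merge \theta(r)) \bisim \theta(p \merge r)$ — a short coinduction that uses finiteness of $\mathcal{A} \cup \{\tau\}$ to see that the two sides enable the same actions and pass to structurally matching states — so nested priority operators flatten. Second, because $\encap{C}$ blocks every send and receive at the seal, a sealed subterm is a \emph{closed} process: it offers no send or receive, hence it can only interleave with its siblings, never communicate with them, and the only information a sealed component contributes to the applicability of the operational rules to the whole term is finite (essentially its current set of maximal-priority actions and whether it accepts). One then argues that only finitely many such component behaviours can occur.

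Given this, I would build the parallel pushdown automaton as follows. The data alphabet $\mathcal{D}$ is a finite set of component tags — one per identifier of $\Gamma$, plus one per relevant sealed-component behaviour identified above — and the bag stores the multiset of components of the current parallel composition. The control states carry the remaining finite bookkeeping: following Theorem~\ref{bpptoppda}, which of the non-accepting component tags are currently present (needed to get acceptance right), and, following Theorem~\ref{bpptoppdatheta}, the set of maximal-priority actions currently enabled (needed to implement $\theta$), together with a family of auxiliary states used to split one model step into a short run of single-element bag operations. A summand $a_i.\tau_C(\encap{C}{\theta(\xi_i)})$ of the equation for $X$ is simulated by the step that gets the tag $X$ from the bag, puts back the multiset of component tags of $\xi_i$, and emits $a_i$ — or emits $\tau$ where the $\tau_C$ and the renaming operators dictate; encapsulation is realised by simply not providing a transition for a blocked send or receive, and abstraction by relabelling the corresponding transition to $\tau$. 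The one genuinely new ingredient is a communication $c(d)$ between two components $Y$ and $Z$ of some $\xi$: I would simulate it by two consecutive steps — first get $Y$ and move to an auxiliary control state recording ``$Y$ has offered $c!d$, waiting for its $c?d$-partner'', then get $Z$ and put back the updated components of both — with these auxiliary states and their $\tau$-steps designed to be inert, just like the $\tau$-steps added at the end of the proof of Theorem~\ref{bpptoppda}.

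Correctness would follow by exhibiting the relation that pairs a parallel pushdown automaton state $(s,x)$ with the process expression obtained by placing the components coded by the multiset $x$ in parallel and wrapping the result in $\tau_C(\encap{C}{\theta(\cdot)})$, provided the control state $s$ is consistent with $x$; the transfer conditions in both directions are read off the operational rules, and divergence preservation holds because every $\tau$-step introduced in the construction is inert. The step I expect to be the main obstacle is precisely the structural analysis in the first paragraph together with the communication simulation: one must show that the nesting of $\tau_C$, $\encap{C}$ and $\theta$ created by repeated unfolding really does reduce to finitely many relevant component behaviours — in particular that the outer $\theta$ does not create genuinely new interaction across a seal — and that firing a binary rendezvous by two consecutive bag operations never strands the automaton in an auxiliary state with no available partner and never introduces a spurious divergence. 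If the sealed components cannot be shown to be essentially finite-state, the fallback is to route all communication through the finite control at the price of much heavier bookkeeping, and that is the part of the proof most likely to need care.
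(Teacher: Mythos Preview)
Your overall plan---pass to Greibach Normal Form and then rerun the construction of Theorem~\ref{bpptoppda}, adding a mechanism for binary communication---is exactly the paper's approach. The paper's entire proof is a two-sentence sketch: ``goes along the lines of the proof of Theorem~\ref{bpptoppda}'', with the single new feature that a communication action may remove \emph{two} non-accepting identifiers from the parallel composition at once, rather than one. Your ``two consecutive bag operations through an auxiliary state'' for a rendezvous is precisely the mechanism needed for that, and it is the only genuinely new ingredient the paper flags.

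Where you diverge from the paper is in the amount of machinery you bring to bear on the shape of reachable states. The paper takes for granted that, once in Greibach Normal Form, the relevant states are (up to bisimilarity) of the form $\tau_C(\encap{C}{\theta(\xi)})$ with $\xi$ a multiset of identifiers, and proceeds directly to the bag encoding with identifiers as data, as in Theorem~\ref{bpptoppda}. You instead devote most of your proposal to a structural analysis of nested ``sealed'' subterms, a flattening law $\theta(p\merge\theta(r))\bisim\theta(p\merge r)$, and a finite classification of sealed-component behaviours. This is not wrong, and indeed you have put your finger on a point the extended-abstract proof sketches over: the GNF as literally stated puts a fresh $\tau_C(\encap{C}{\theta(\cdot)})$ under every prefix, so repeated unfolding does produce nesting, and some argument is needed to show it collapses. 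But the paper does not go through any of that; it simply treats the outer wrapper as fixed and the contents as a multiset of identifiers, and records only the one place where the BPP construction must be tweaked (two identifiers consumed in one step). So your route is compatible with the paper's, but considerably heavier than what the paper actually writes down; if you want to match the paper's presentation, you can cut the sealed-subterm analysis to a one-line remark that the wrappers flatten, and concentrate on the two-element removal for communication, which is the point the paper singles out.
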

\begin{proof}
As again, we can bring a finite weakly guarded \BCP$^{\dl\emp}_{\theta}$-specification into Greibach Normal Form, this proof goes along the lines of the proof of Theorem~\ref{bpptoppda}. The only difference is, is that because of a communication action, \emph{two} non-accepting identifiers can be removed from a parallel composition at the same time.
\end{proof}

To conclude this section, we consider how far we can go with communication, but without priorities. In the bag, not using priorities, it is required to count the number of remove transitions, in order to know when a show absence transition is enabled. We can do this counting in the communication actions, but then the parametrising data set $\mathcal{D}$ becomes infinite, and the specification uses countable sums. This is a drawback, in our opinion.

\begin{thm}\label{thm:ppdabcp}
For every parallel pushdown automaton there is a finite weakly guarded specification over \BCP$^{\dl\emp}$ extended with infinite choice, such that their process graphs are divergence-preserving branching bisimilar.
\end{thm}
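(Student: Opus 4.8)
The plan is to reduce Theorem~\ref{thm:ppdabcp} to Theorem~\ref{thm:fullcorr}, or rather to re-run its proof while showing that the priority operator can be simulated using unbounded data and countable summation. Recall that in the construction behind Theorem~\ref{thm:fullcorr} the only essential use of $\theta$ is to give the communication actions $p(q)$ (passing the current control state $q$) highest priority, and to give the ``exit'' action priority over the ordinary input actions, so that a component does not perform an input step before every parallel component has received the up-to-date state. The other use, giving $a^{-}$ lower priority than $a^{+}$ for a negative-test transition $s\xrightarrow{a[-d/x]}s'$, is exactly the problem the introductory text describes for the bag: a negative test ``$d$ does not occur'' must be disabled whenever a positive test on $d$ is possible. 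First I would make these roles explicit and observe that both can be discharged by an \emph{explicit count}.

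Concretely, I would represent the bag contents not as a plain parallel composition of data-identifiers but let each component carry, or the control process maintain, the multiplicity of each $d\in\mathcal{D}$ as a natural-number parameter passed around in communications. Since $\mathcal{D}$ is finite but multiplicities are unbounded, the parametrising set becomes $\mathbb{N}^{\mathcal{D}}$ (countable), and the defining equations need countable sums $\sum_{n\in\mathbb{N}}$ over the possible counts received. A negative-test transition $s\xrightarrow{a[-d/x]}s'$ is then guarded by receiving the value $0$ for the $d$-count, rather than by a priority; a positive-test transition receives a value $n+1$. This removes the $a^{-}<a^{+}$ use of $\theta$ entirely. For the ``exit'' discipline, instead of using priority I would use an explicit token-passing / round-based protocol: the control process $P_q$ sends its state to each of the (finitely many, in any reachable configuration — but unboundedly many in general) parallel components in a fixed discipline, and uses an explicit acknowledgement so that an input step is only unblocked once a full round has completed. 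Because the number of parallel components is itself tracked by the counts, this bookkeeping can be organised with countable summation and without negative premises.

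The key steps, in order, are: (i) start from the Greibach Normal Form for \BCP$^{\dl\emp}$-specifications (the GNF theorem in the excerpt is stated for \BCP$^{\dl\emp}_{\theta}$; I would state and use the obvious $\theta$-free analogue, noting that with infinite choice the head-normal-form/expansion argument still goes through); (ii) take an arbitrary parallel pushdown automaton $M=(\mathcal{S},\mathcal{A},\mathcal{D},\to,\uparrow,\downarrow)$ and build, as in Theorem~\ref{thm:fullcorr}, a finite control process together with one data-identifier per $d\in\mathcal{D}$, but now with a $\mathbb{N}$-valued multiplicity parameter and countable-choice input; (iii) replace every appeal to $\theta$ by either a $0$-versus-$(n{+}1)$ test on a received count (for negative/positive bag tests) or an explicit round-completion acknowledgement protocol (for the state-broadcast phase), so that after abstracting the internal communication ports and encapsulating them, the observable behaviour is unchanged; (iv) verify that the resulting specification is finite (finitely many equation \emph{schemata}, each with a countable sum) and weakly guarded, every internal step being inert; (v) exhibit a divergence-preserving branching bisimulation between the process graph of this specification and $\mathcal{P}(M)$, essentially the one used for Theorem~\ref{thm:fullcorr} with the priority-controlled $\tau$-sequences replaced by the acknowledgement-controlled ones, checking in particular that no new divergences are introduced (each broadcast round terminates, since the component count is finite in every reachable state).

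The main obstacle I expect is (iii)–(v): showing that the explicit round-based broadcast protocol faithfully mimics the priority-enforced ordering without creating either spurious interleavings (a component performing an input before the round is done) or divergence. In the $\theta$-based proof, priority makes the ``all components updated'' condition an instantaneous side condition; replacing it by message passing means one must argue that the intermediate states of a partially-completed round are all related by the bisimulation to the same target state of $\mathcal{P}(M)$, and that the round is guaranteed to finish. This is where the divergence-preservation clause bites, and where the countable-summation machinery is genuinely needed rather than cosmetic; I would isolate it as a lemma ``broadcast rounds are finite and inert'' and prove the bisimulation modularly on top of it. The remaining parts — GNF without $\theta$, finiteness and weak guardedness of the specification — are routine adaptations of results already in the paper.
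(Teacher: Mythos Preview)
Your diagnosis is right on the central point: replacing $\theta$ requires making the multiplicity of each $d$ explicit and passing it through communication, so that a negative test becomes ``receive count~$0$'' and a positive test becomes ``receive count~$n{+}1$''. That is exactly what the paper does.

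Where you diverge from the paper is in how the global information is distributed among the parallel components. You try to retain the broadcast architecture of the $\theta$-construction and replace the priority-enforced ``update every component, then exit'' phase by an acknowledged round-based protocol. The paper's route, as the example following the theorem indicates, is simpler: it does not broadcast at all. A single token carrying the current count (and, in the general multi-state case, also the current control state) is passed around; exactly one parallel component at a time receives the token, performs the next visible step of the automaton, and then re-emits an updated token. All other components sit as passive receivers. Between any two visible steps there is precisely one internal communication, so every $\tau$ is inert and divergence-preservation comes for free. The countable sums arise only because the token ranges over~$\mathbb{N}$ (or $\mathbb{N}^{\mathcal{D}}\times\mathcal{S}$ in general).

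Your plan is not wrong in principle, but the broadcast detour creates genuine work that the paper sidesteps. Enforcing a ``fixed discipline'' among indistinguishable parallel receivers without priorities effectively forces you to give components identities, and your proposed lemma ``broadcast rounds are finite and inert'' is not obviously true as stated: during a partially completed round some components have the new state and some the old, and you must argue that none of them can perform a visible step prematurely, which is precisely what priority was buying you. If you switch to the single-token scheme---let the token carry the full configuration, sum countably over received values, and have only the receiving component act---steps (iii)--(v) of your outline collapse to a direct bisimulation with no round-termination argument needed.
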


\begin{exa}
Consider the parallel pushdown automaton in Figure~\ref{fig:ppda.ac}, with state $s$. As the data set is a singleton, we just need to count the number of $1$'s, and we use natural numbers as parameters.
\[ S \defeqn \emp + c.\tau_s(\encap{s}{s!1.\emp \merge X_{\{1\}}}) \]
\[ X_{\{1\}} \defeqn s?1.(\emp + (a.s!2.\emp \merge X_{\{1\}} \merge X_{\{1\}}) + (b.s!0.\emp \merge X_{\emptyset})) + \]
\[ \qquad \qquad + \sum_{n \geq 2} s?n.(\emp + (a.s!(n+1).\emp \merge X_{\{1\}} \merge X_{\{1\}}) + b.s!(n-1).\emp) \]
\[ X_{\emptyset} \defeqn s?0.(\emp + (c.s!1.\emp \merge X_{\{1\}})). \]
\end{exa}

\section{A characterisation}\label{characterisation}

A computer shows interaction between a finite control and the memory. The finite control can be represented by a regular process (a finite automaton). 
In \cite{BCT08}, we considered a memory in the form of a stack, and we established that a pushdown process can be characterised as a regular process communicating with a stack. Here, we have a memory in the form of a bag, and we can establish a similar result.

\begin{thm}\label{thm:charbag}
A process $p$ is a parallel push-down process, if and only there is a
regular process $q$ such that
%\[
  $p \bbisim^{\Delta} \tau_{\{i,o\}}(\encap{\{i,o\}}{q \merge
    \mathit{ABag}^{io}})$.
%|]
\end{thm}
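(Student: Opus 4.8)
Looking at this theorem, I need to prove a characterization: a process is a parallel pushdown process iff it's a regular process communicating with an always-accepting bag. This is a biconditional, so I need both directions. Let me think about the structure.

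The "if" direction: given a regular process $q$, I need to show $\tau_{\{i,o\}}(\encap{\{i,o\}}{q \merge \mathit{ABag}^{io}})$ is a parallel pushdown process. A regular process is a finite automaton. The bag is a parallel pushdown automaton. Putting them in parallel with communication... I'd want to show this gives a parallel pushdown automaton. The key insight: the states of the combined system are (state of $q$, contents of bag). The regular process has finitely many states, the bag contributes the multiset. Communication actions $i(d)$ and $o(d)$ become $\tau$ after abstraction. So I construct a parallel pushdown automaton whose states are the states of $q$, data alphabet is $\mathcal{D}$, and transitions simulate the communication. Need to be careful about the $show(^-d)$ / $o!(^-d)$ actions corresponding to the "failed get" transitions $[-d/\emptyset]$. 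Also divergence: the communication might introduce $\tau$-loops, need to check divergence-preservation.

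The "only if" direction: given a parallel pushdown automaton $M$, construct a regular process $q$ such that the equation holds. This is the harder direction. The idea: the finite control $q$ should encode the state of $M$ and use the bag for the multiset memory. When $M$ does $s \xrightarrow{a[+d/x]} t$, the control $q$ removes $d$ from the bag (via $o?(^+d)$... wait, the bag outputs on $o$, so $q$ receives on $o$), then inserts the elements of $x$ (via $i!$), then moves to state $t$ doing the $a$. For $s \xrightarrow{a[-d/x]} t$, $q$ needs to check $d \notin$ bag — it receives $o?(^-d)$ — then inserts $x$. But wait: this is where priorities matter in the bag, but the bag $\mathit{ABag}^{io}$ already uses priorities internally. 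Actually there's a subtlety with asymmetric communication mentioned earlier in the paper. The main obstacle: inserting a multiset $x$ of elements requires a sequence of $i!d$ actions, which must all be $\tau$ after abstraction and must be inert — need divergence-preservation and that these intermediate states are branching-bisimilar to the right thing. Also $q$ must be a *regular* (finite-state) process; since $M$ has finitely many states and $x$ ranges over finitely many bags (finite set of transitions), the intermediate "inserting" states are finitely many.

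\begin{proof}[Proof sketch]
We prove both implications.

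\textbf{($\Leftarrow$)} Suppose $q$ is a regular process, i.e., it is (the process of) a finite labelled transition system. We construct a parallel pushdown automaton $M$ whose process graph is divergence-preserving branching bisimilar to $\tau_{\{i,o\}}(\encap{\{i,o\}}{q \merge \mathit{ABag}^{io}})$. Take the states of $M$ to be the states of $q$, the data alphabet to be $\mathcal{D}$, and the initial and accepting states inherited from $q$. A transition of the composite system is either a non-communicating step of $q$ (which becomes a step of $M$ that leaves the bag untouched, encodable by adding a dummy element or by using any $[-d/\lbag d\rbag]$ pattern on a fresh datum — more cleanly, we enlarge $\mathcal{D}$ with a never-inserted sentinel), or a communication on $i$ or $o$ followed by abstraction, which becomes $\tau$: a step $q \step{i?d} q'$ synchronising with the bag's $\mathit{ins}$ becomes $s \xrightarrow{\tau[+e/\lbag e, d\rbag]} s'$ and $s \xrightarrow{\tau[-e/\lbag e, d\rbag]}$-style transitions matching the two $\mathit{ins}$ clauses of $\mathit{ABag}^{io}$, a step $q \step{o?(^+d)} q'$ becomes $s \xrightarrow{\tau[+d/\emptyset]} s'$, and a step $q \step{o?(^-d)} q'$ becomes $s \xrightarrow{\tau[-d/\emptyset]} s'$. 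The bisimulation relates $(s, x)$ in $\mathcal{P}(M)$ to $\tau_{\{i,o\}}(\encap{\{i,o\}}{q_s \merge B_x})$ where $q_s$ is $q$ in state $s$ and $B_x$ is the bag holding $x$. Since the priority operator inside $\mathit{ABag}^{io}$ only orders $o!(^-d)$ below $o!(^+d)$, the negative-premise behaviour of $\mathit{ABag}^{io}$ corresponds exactly to the $[-d/\cdot]$ transitions of a parallel pushdown automaton being enabled precisely when $d$ is absent. Divergence-preservation holds because no $\tau$-cycles are introduced by abstraction here (each communication strictly changes either the state of $q$ or the bag), and the bag's internal behaviour is divergence-free.

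\textbf{($\Rightarrow$)} Conversely, let $M = (\mathcal{S},\mathcal{A},\mathcal{D},{\rightarrow},{\uparrow},{\downarrow})$ be a parallel pushdown automaton. We build a regular process $q$ — a finite recursive specification with identifiers of two kinds: "control" identifiers $C_s$ for $s \in \mathcal{S}$, and finitely many "loading" identifiers $L^{s}_{w}$ for $s \in \mathcal{S}$ and $w$ a suffix of some bag $x'$ appearing in a transition of $M$ (there are finitely many such, as $\rightarrow$ is finite). For each transition $s \xrightarrow{a[+d/x']} t$ we let $C_s$ have a summand $o!(^+d).a.L^{t}_{x'}$: first ask the bag to remove a $d$ (using the bag's $rem$ side, received by $q$ on $o$), then perform $a$, then load $x'$. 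For each transition $s \xrightarrow{a[-d/x']} t$ we let $C_s$ have a summand $o!(^-d).a.L^{t}_{x'}$: ask the bag to confirm absence of $d$, perform $a$, then load $x'$. The loading identifier with empty suffix is $L^{s}_{\emptyset} \defeqn C_s$, and $L^{s}_{\lbag d \rbag \cup w} \defeqn i!d.L^{s}_{w}$ pushes the elements of $x'$ one at a time. Finally $C_s$ gets an $\emp$ summand iff $\term{s}$. The claimed equation then follows from a divergence-preserving branching bisimulation relating the state $(s,x)$ of $\mathcal{P}(M)$ to $\tau_{\{i,o\}}(\encap{\{i,o\}}{C_s \merge B_x})$, and relating intermediate loading configurations $\tau_{\{i,o\}}(\encap{\{i,o\}}{a.L^{t}_{w} \merge B_{x}})$ or $\tau_{\{i,o\}}(\encap{\{i,o\}}{L^{t}_{w} \merge B_{x}})$ to the appropriate $(s,x)$ via inert $\tau$-steps. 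The loading steps are inert because from $L^{t}_{\lbag d\rbag \cup w} \merge B_x$ the only enabled transition is the synchronisation $i(d)$, which after abstraction is $\tau$, leaving $L^{t}_{w} \merge B_{\lbag d \rbag \cup x}$. Here the asymmetric-communication convention matters: we must ensure the bag cannot, in a loading state, race to perform one of its own $rem$ or $show$ outputs and desynchronise the control; this is handled because $q$ in a loading state offers no $o$-action, and abstraction of $\{i,o\}$ removes any unmatched bag outputs only if we also ensure (as the paper's asymmetric-communication setup does) that the bag's outputs are not autonomously observable. Divergence-preservation: the only $\tau$-loops that could arise are finite loading sequences, which terminate since each $w$ strictly decreases; no genuine $\tau$-divergence is created.

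\textbf{Main obstacle.} The delicate point is the second direction: showing the loading $\tau$-steps are \emph{inert} (so that divergence-preserving branching bisimilarity, not merely language equivalence, is obtained) and, in the priority-free asymmetric-communication formulation, ruling out spurious races in which the bag autonomously fires a $rem$ or $show$ output while the control is mid-load. This requires either the asymmetric communication mechanism referenced in the paper or an explicit locking discipline in $q$, and checking that the resulting intermediate states collapse correctly onto configurations $(s,x)$ of $\mathcal{P}(M)$. The finiteness of $q$ hinges on the bags $x'$ in transitions of $M$ ranging over a finite set, which holds since $\rightarrow$ is finite.
\end{proof}
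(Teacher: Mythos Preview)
The paper is an extended abstract and supplies no proof of this theorem, so there is nothing to compare against directly; I assess your sketch on its own merits.

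Your $(\Leftarrow)$ direction is essentially sound: the states of the regular process $q$ become the states of a parallel pushdown automaton, communications with the bag become $\tau$-labelled transitions with the appropriate $[+d/\cdots]$ or $[-d/\cdots]$ annotation, and the priority discipline inside $\mathit{ABag}^{io}$ lines up with the enabling conditions of such an automaton. (Minor slip: since the bag performs $i?d$ and $o!(^{\pm}d)$, the control $q$ must \emph{send} on $i$ and \emph{receive} on $o$; you consistently write the opposite, e.g.\ $q\step{i?d}$ and $o!(^{+}d)$ for the control, even though your prose says ``received by $q$ on $o$''.)

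Your $(\Rightarrow)$ direction, however, has a genuine gap, and it is not the one you flag. Giving $C_s$ the summand $o?(^{+}d).a.L^{t}_{x'}$ (direction corrected) makes the \emph{removal} of $d$ a $\tau$-step that happens \emph{before} the visible action $a$. That $\tau$ is not inert: after it, the control is committed to the single action $a$, whereas the configuration $(s,x)$ of the automaton in general still offers several outgoing transitions on different data and with different labels. Concretely, for a one-state automaton with $s \xrightarrow{a[+1/\emptyset]} s$ and $s \xrightarrow{b[+1/\emptyset]} s$, the state $(s,\lbag 1\rbag)$ offers both $a$ and $b$, but your composite after either removal-$\tau$ offers only one of them, so no branching bisimulation can relate them. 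The loading $\tau$'s and the ``spurious race'' you worry about are harmless (encapsulation already blocks unmatched bag outputs, and each load strictly shrinks the pending multiset); the real obstacle is this pre-action commitment. A correct construction must ensure that every $\tau$-step preceding the visible $a$ still leaves the full menu of $(s,x)$ available---for instance by first probing each $d\in\mathcal{D}$ (receive $o?(^{+}d)$ and immediately re-insert via $i!d$, or receive $o?(^{-}d)$) to learn the presence signature, and only then, with the bag restored, offering exactly the enabled visible actions; the actual removal of $d$ is deferred until \emph{after} $a$.
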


  In \cite{BCT09}, it was established that every parallel process
  expression is rooted branching bisimilar to a regular process
  communicating with a process that is defined as follows:
   \[   \mathit{AB}^{io} \defeqn \emp + \sum_{d \in \mathcal{D}}
     \pref[i?d]{ \mathit{AB}^{io} \merge (\emp + o!d.\emp)}. \]
  By Theorem~\ref{bpptoppda}, every parallel process expression
  denotes a parallel pushdown process, and so
  Theorem~\ref{thm:charbag} can be applied to get a characterisation
  in terms of a regular process that communicates with
  $\mathit{ABag}^{io}$, the aways accepting bag. Note, however, that
  $\mathit{ABag}^{io}$ uses the priority operator to facilitate the show
  absence actions. With the process $\mathit{AB}^{io}$ from
  \cite{BCT09} we can establish a similar result, but we have to replace receiving an element from a bag by \emph{getting} an element from a bag, where failure to get a particular element can be detected (see \cite{Ber85,BBR10}. Thus, for $d \in \mathcal{D}$, we add elements $c?\!?^{+}d$ (a get) and $c?\!?^{-}d$ (a failed get) to $\mathit{COM}_C$ with $c \in C$, and add $c \times d$ for a failed communication that will also be hidden by $\tau_C$ with $c \in C$. We add the operational rules in Figure~\ref{fig:semantics-getc}. Notice the  second rule uses a negative premise. Still, as we use weakly guarded recursion, and the rules are in so-called \emph{panth} format, we obtain a labelled transition system, see \cite{Ver95}.

\begin{figure}[htb]
  \centering
  \begin{osrules}
   \osrule*{ p \step{c?\!?^{+}d} p' & q \step{c!d} q'}{p \merge  q \step{c(d)} p' \merge q' \quad q \merge p \step{c(d)} q' \merge p'} \qquad
    \osrule*{p \step{c?\!?^{-}d} p' &  q \nstep{c!d}}{p \merge q \step{c\times d} p' \merge q \quad q \merge p \step{c\times d} q \merge p'}  \\
   \end{osrules}
\caption{Operational semantics for get communication.}
\label{fig:semantics-getc}
\end{figure}

\begin{thm}\label{thm:charget}
A process $p$ is a parallel push-down process, if and only there is a regular process $q$ such that
% \[
$p \bbisim^{\Delta} \tau_{\{i,o\}}(\encap{\{i,o\}}{q \merge
    \mathit{AB}^{io}})$.
%\]
\end{thm}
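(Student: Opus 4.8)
The plan is to prove both directions of the equivalence, reusing as much of Theorem~\ref{thm:charbag} as possible. For the ``if'' direction, suppose $p \bbisim^{\Delta} \tau_{\{i,o\}}(\encap{\{i,o\}}{q \merge \mathit{AB}^{io}})$ for a regular process $q$. I would first observe that $\mathit{AB}^{io}$, together with the new get/failed-get communication rules of Figure~\ref{fig:semantics-getc}, can itself be realised as a parallel pushdown automaton: its states record nothing (one control state suffices, as for $\mathit{ABag}^{io}$), the bag holds the multiset of stored data elements, an $i?d$ becomes an insert, a $c??^{+}d$ synchronising with an $o!d$ becomes a $+d/\emptyset$ transition, and a failed get $c??^{-}d$ becomes a $-d/\emptyset$ transition — exactly the four transition shapes of Figure~\ref{fig:bagalways}, but \emph{without} the priority operator, since the failed-get rule already has the negative premise built in at the transition-system level. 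Then the parallel composition of the finite control $q$ with this bag-automaton, followed by encapsulation and abstraction on $\{i,o\}$, is again a parallel pushdown automaton (the product of a finite automaton with a one-state PPDA has finitely many states, and the $c(d)$ and $c\times d$ communication transitions become $\tau$-transitions after abstraction). Hence $p$ is a parallel pushdown process.

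For the ``only if'' direction, let $p$ be the process of a parallel pushdown automaton $M$. By Theorem~\ref{thm:charbag} there is a regular process $q_0$ with $p \bbisim^{\Delta} \tau_{\{i,o\}}(\encap{\{i,o\}}{q_0 \merge \mathit{ABag}^{io}})$, where $\mathit{ABag}^{io}$ uses the priority operator only to schedule the show-absence outputs $o!(^{-}d)$ below the genuine removes $o!(^{+}d)$. The idea is to push that scheduling discipline into the regular control by switching from $\mathit{ABag}^{io}$ to $\mathit{AB}^{io}$ and using the get communication of Figure~\ref{fig:semantics-getc}. Concretely, I would build a new regular process $q$ that behaves like $q_0$ except that wherever $q_0$ receives an element over $o$ from the bag, $q$ performs a get $o??^{+}d$ (synchronising with the bag's $o!d$), and wherever $q_0$ reacts to a show-absence message $o!(^{-}d)$, $q$ instead performs a failed get $o??^{-}d$, which — by the negative premise in the rule — is enabled in $\tau_{\{i,o\}}(\encap{\{i,o\}}{q \merge \mathit{AB}^{io}})$ exactly when the bag contains no $d$, i.e.\ precisely when $\mathit{ABag}^{io}$ would have offered $o!(^{-}d)$. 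One then checks that the relation pairing a state $q_0 \merge (\text{bag content } x)$ of the old system with the corresponding state $q \merge (\text{bag content } x)$ of the new one is a divergence-preserving branching bisimulation: inserts and gets match step-for-step, the $c(d)$ communications are inert $\tau$'s on both sides, and the show-absence behaviour matches by the equivalence just noted. Since $q_0$ is regular and the transformation is a bounded syntactic rewrite on a finite automaton, $q$ is again regular.

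The main obstacle I expect is the second direction, specifically verifying that replacing the priority-scheduled $o!(^{-}d)$ of $\mathit{ABag}^{io}$ by a failed get $o??^{-}d$ against $\mathit{AB}^{io}$ yields \emph{exactly} the same enabledness after encapsulation and abstraction — the subtlety being that the negative premise $q \nstep{c!d}$ in Figure~\ref{fig:semantics-getc} refers to the bag component in isolation, so one must argue that $\mathit{AB}^{io}$ offers $o!d$ iff its multiset contains $d$, and that encapsulation on $\{i,o\}$ does not spuriously block or enable any such step. A secondary point is divergence preservation: one must make sure the rewritten control $q$ introduces no new $\tau$-cycles (the only $\tau$'s come from hidden $c(d)$ and $c\times d$ communications, and both consume a handshake, so they are inert), and that any divergence already present in $q_0$ or in the bag is matched. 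Once these are pinned down, both containments are routine, and Theorem~\ref{thm:charget} follows.
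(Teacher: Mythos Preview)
The paper, being an extended abstract, does not supply a proof of Theorem~\ref{thm:charget}; it only motivates the statement by remarking that the construction behind Theorem~\ref{thm:charbag} can be redone with $\mathit{AB}^{io}$ once ordinary receive on port $o$ is replaced by the get primitives. So there is no detailed argument in the paper to compare against line by line.

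Your plan is sound and would go through. The ``if'' direction is the expected one: encode the pair (state of $q$, multiset currently stored in $\mathit{AB}^{io}$) directly as a parallel pushdown automaton, turning $i!d$ into an insert, a successful $o??^{+}d$/$o!d$ handshake into a $+d$ transition, and a failed get $o??^{-}d$ into a $-d$ transition. One small point you glossed over: every transition of a parallel pushdown automaton must carry a bag annotation, so visible steps of $q$ that do not touch the bag need a dummy test such as $-\bot/\emptyset$ for a fresh symbol $\bot$ that is never inserted.

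For the ``only if'' direction you take a route that differs from what the paper hints at. The text suggests redoing the direct construction of Theorem~\ref{thm:charbag}: from the given parallel pushdown automaton build the regular control $q$ immediately, letting each $a[+d/x]$ transition become a get $o??^{+}d$ followed by the visible $a$ and a sequence of inserts $i!e$ for $e\in x$, and each $a[-d/x]$ transition become a failed get $o??^{-}d$ followed by the same. Your alternative---first invoke Theorem~\ref{thm:charbag} to obtain $q_0$ against $\mathit{ABag}^{io}$, then syntactically replace $o?(^{+}d)$ by $o??^{+}d$ and $o?(^{-}d)$ by $o??^{-}d$ and swap $\mathit{ABag}^{io}$ for $\mathit{AB}^{io}$---is perfectly legitimate and has the virtue of reusing the earlier theorem wholesale. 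The bisimulation check you describe is the right one; the crucial observation (which you identify) is that $\mathit{AB}^{io}$ offers $o!d$ iff its current multiset contains $d$, so after encapsulation the enabling condition for $o\times d$ coincides exactly with the priority-enforced enabling of $o!(^{-}d)$ in $\mathit{ABag}^{io}$. Divergence preservation is unproblematic for the reason you give: every hidden step consumes a handshake with the bag and hence changes the bag contents or the control state, so no new $\tau$-cycles are introduced. Both routes yield the result; yours is more modular, the paper's implied route is more self-contained.
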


We see the bag is the prototypical parallel pushdown process, as all
parallel pushdown processes can be realised as a regular process
communicating with a bag. A bag is not a pushdown process. Likewise,
the stack is the prototypical pushdown process, but not a parallel
pushdown process. The counter is not a regular process, but it is both
a pushdown process and a parallel pushdown
process. Figure~\ref{fig:classification} provides a complete
picture. The queue is not a pushdown process and also not a parallel
pushdown process. It is the prototypical executable process, as every
executable process can be characterized as a regular process
communicating with an always accepting queue. By using a queue, the
Turing tape can be defined.

\begin{figure}[htb]
\begin{center}
\begin{tikzpicture}
\usetikzlibrary{shapes}
\clip (0,0) circle (3.3cm);
\node[circle,minimum size=6.6cm,draw] (exe) at (0,0){};
\path (exe.north) coordinate (exetext) node[below,yshift=-0.2cm]{Executable};
\node[circle,minimum size=2.2cm,draw, anchor=south] (reg) at (exe.south){Regular};
\node[ellipse, minimum width=5.61cm, minimum height=3.52cm,draw,anchor=west,rotate=45] (ppd) at (reg.south west){};
\node[circle,minimum size=1.1cm,anchor=east,yshift=.3cm] (ppdtext) at
(exe.east){Parallel Pushdown};
\node[ellipse, minimum width=5.61cm, minimum height=3.52cm,draw,anchor=east,rotate=-45] (pd) at (reg.south east){};
\node[circle,minimum size=1.1cm,anchor=west,yshift=.3cm] (pdtext) at (exe.west){Pushdown};
\path (0,-0.5) coordinate (C) node[below]{Counter};
\fill (C) circle(2pt);
\path (2,-0.5) coordinate (B) node[below]{Bag};
\fill (B) circle(2pt);
\path (-2,-0.5) coordinate (S) node[below]{Stack};
\fill (S) circle(2pt);
\path (0,1.5) coordinate (Q) node[below]{Queue};
\fill (Q) circle(2pt);
\end{tikzpicture}
\end{center}
  \caption{Classification of Executable, Pushdown, Parallel Pushdown,
    and Regular processes and the prototypical processes Queue, Bag,
    Stack and Counter.}\label{fig:classification}
\end{figure}
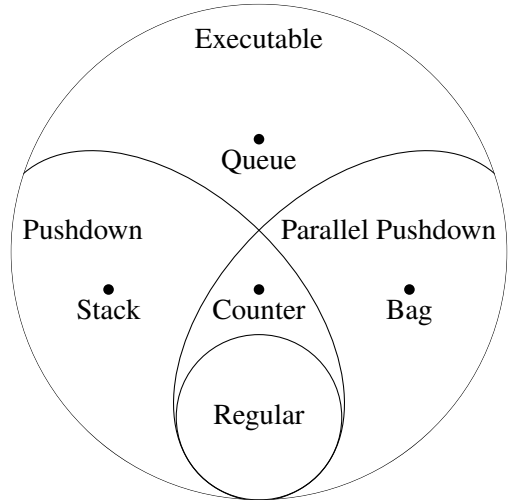

\section{Conclusion}
In language theory, the set of languages given by
a parallel pushdown automaton coincides with the set of languages given by a
commutative context-free grammar. A language is an equivalence class of
process graphs modulo language
equivalence. A process is an equivalence class of process graphs modulo divergence-preserving branching bisimulation.

This paper solves the question how we can characterize the set of processes given by a parallel pushdown automaton. In the process setting, a commutative context-free grammar is a process algebra with actions, choice, parallel composition and finite recursion. We need to limit to weakly guarded recursion in the process setting. Starting out from the seminal process algebra PA of \cite{BK84} with sequential composition restricted to action prefixing, we need to add constants for the inactive and accepting process and for the inactive non-accepting (deadlock) process. Thus, we arrive at the process algebra \BPP$^{\dl\emp}$ of the basic parallel processes. We extend this algebra with the priority operator $\theta$, in order to give some actions priority over others.

Then, every finite weakly guarded \BPP$^{\dl\emp}_{\theta}$ specification yields the process of a parallel pushdown automaton, but not the other way around, there are processes of parallel pushdown automata that cannot be given by a finite weakly guarded \BPP$^{\dl\emp}_{\theta}$ specification. For parallel pushdown automata with just one state, such a specification can be found.

We obtain a complete correspondence by adding value passing communication. 

\medskip

\fbox{\parbox{.9\textwidth}{The set of processes given by a parallel pushdown automaton coincides with the set of processes given by a finite weakly guarded recursive specification over a process algebra with actions, choice, priorities, and parallel composition with value passing communication.}}

\medskip

We also provide another characterisation of parallel pushdown processes: a process is a parallel pushdown process if and only if there is a regular process such that the process is divergence-preserving branching bisimilar to the regular process communicating with an always accepting bag. 

This paper contributes to our ongoing project to integrate automata theory and process theory. As a result, we can present the foundations of computer science using a computer model with interaction. Such a computer model relates more closely to the computers we see all around us.

As future work, we need to compare the algebra used here with Petri nets, see e.g. \cite{DMG12}.

%%
%% Bibliography
%%

%% Please use bibtex, 

\bibliographystyle{eptcs}
\bibliography{main}

\end{document}